\newtheorem{thm}{Theorem}[section]
\newtheorem{prop}[thm]{Proposition}
\newtheorem{cor}[thm]{Corollary}
\newtheorem{defn}[thm]{Definition}
\newtheorem{prob}[thm]{Problem}
\newtheorem{rem}[thm]{Remark}
\def\O{\mathcal{O}}
 \def\kk{\mathbb{K}}
 \def\II{\mathbb{I}}
 \def\NN{\mathbb{N}}
 \def\Tl{\tilde{T}}
 \def\ML{\mathcal{L}}
 \def\<{\langle}
 \def\>{\rangle}
 \def\Oc{\mathcal{O}}
 \def\Osft{\tilde{\mathcal{O}}}
 \newcommand{\Unit}[1]{\mathfrak{U}_{#1}}
\begin{document}
\title{Superfast solution of Toeplitz systems based on syzygy reduction}

\author{Houssam Khalil}
\address
{{Houssam Khalil}, Institut Camille Jordan, universit\'e Claude Bernard Lyon 1, 43 boulevard du 11 novembre 1918, 69622 Villeurbanne cedex France}
\email{khalil@math.univ-lyon1.fr}

\author{Bernard Mourrain}
\address
{{Bernard Mourrain}, INRIA, GALAAD team, 2004 route des Lucioles, BP 93, 06902 Sophia Antipolis Cedex, France}
\email{mourrain@sophia.inria.fr}

\author{Michelle Schatzman}
\address
{{Michelle Schatzman},
Institut Camille Jordan, universit\'e Claude Bernard Lyon 1, 43 boulevard du 11 novembre 1918, 69622 Villeurbanne cedex France}
\email{schatz@math.univ-lyon1.fr}

\begin{abstract}
  We present a new superfast algorithm for solving Toeplitz
  systems. This algorithm is based on a relation between the solution
  of such problems and syzygies of polynomials or moving lines. We
  show an explicit connection between the generators of a Toeplitz
  matrix and the generators of the corresponding module of
  syzygies. We show that this module is generated by two elements and
  the solution of a Toeplitz system $T\,u=g$ can be reinterpreted as
  the remainder of a vector depending on $g$, by these two
  generators. We obtain these generators and this remainder with
  computational complexity $\O(n\log^2 n)$ for a Toeplitz matrix of
  size $n\times n$.

\end{abstract}

%

\maketitle

\section{Introduction}
Structured matrices appear in various domains, such as scientific
computing, signal processing, \dots They usually express, in a
linearize way, a problem which depends on fewer parameters than the
number of entries of the corresponding matrix. An important area of
research is devoted to the development of methods for the treatment of
such matrices, which depend on the parameters defining them.

Among well-known structured matrices, Toeplitz and Hankel structures
have been intensively studied \cite{MR782105,MR1355506}. Nearly
optimal algorithms are known for their multiplication by a vector and
the solution of linear systems, for such structure. Namely, if $A$ is
a Toeplitz matrix of size $n$, multiplying it by a vector or solving a
linear system with $A$ requires $\Osft(n)$ arithmetic operations
(where $\Osft(n)=\Oc(n \log^{c}(n))$ for some $c>0$)
\cite{LDRBA80,MR1871324}. Such algorithms are called superfast, in
opposition with fast algorithms requiring $\Oc(n^{2})$ arithmetic
operations.

The fundamental ingredients in these algorithms are the so-called
generators \cite{MR1355506}, encoding the minimal information stored
in these matrices, and on which the matrix transformations are
translated.  The correlation with other types of structured matrices
has also been well developed in the literature
\cite{MR1843842,MR1755557}, allowing to treat efficiently other
structures such as Vandermonde or Cauchy-like structures.

Such problems are strongly connected to polynomial problems
\cite{LDRFuh96b,MR1289412}. For instance, the product of a Toeplitz
matrix by a vector can be deduced from the product of two univariate
polynomials, and thus can be computed efficiently by
evaluation-interpolation techniques, based on FFT. The inverse of a
Hankel or Toeplitz matrix is connected to the Bezoutian of the
polynomials associated to their generators. Such a construction is
related to Gohberg-Semencul formula \cite{LDRGS72} (or Trench
algorithm \cite{LDRTre64}), which describes the inverse  of a Toeplitz matrix in terms of the 
solution of two specific Toeplitz systems (see also Gohberg-Prupnick formula \cite{LDRGK72}).

Most of these methods involve univariate polynomials. So far,
few investigations have been pursued for the treatment of multilevel
structured matrices \cite{LDRTyr85,khalil}, related to multivariate
problems. Such linear systems appear for instance in resultant or in
residue constructions, in normal form computations, or more generally
in multivariate polynomial algebra. We refer to \cite{MR1762401} for a
general description of multi-structured
matrices and their correlations with multivariate
polynomials. Surprisingly, these multivariate structure also appear
in numerical scheme and preconditionners \cite{khalil}. A main
challenge here is to devise superfast algorithms of complexity
$\Osft(n)$ for the solution of multi-structured systems of size
$n$. 

In this paper, we re-investigate the solution of Toeplitz systems $T\,
u =g$ from a new point of view which can be generalized to two-level
Toeplitz systems. We correlate the solution of such problems with
syzygies of polynomials. We show an explicit connection between the
generators of a Toeplitz matrix and the generators of the
corresponding module of syzygies. We show that this module is
generated by two elements of degree $n$ and the solution of $T\,u=g$
can be reinterpreted as the remainder of an explicit polynomial vector
depending on $g$, by these two generators. We give two algorithms,
with computational complexity $\O(n\log^2n)$, to compute the
generators of the module of syzygies. We give finally an algorithm,
with computational complexity $\O(n\log^2n)$, for the division of the
generators by the polynomial vector depending on $g$.  Our new syzygy
approach can be connected with Pad\'e approximation method developed
in \cite{LDRBGY80} to compute efficiently particular solutions of
Toeplitz linear system. But we replace the computation of generators
of structured matrices by the computation of generators of a syzygy
module and the solution of the linear system from particular solutions
by Euclidean reduction by the generators of the syzygy module.

Let $R=\kk[x]$. For $n \in \NN$, we denote by $\kk[x]_{n}$ the vector
space of polynomials of degree $\le n$. Let $L=\kk[x,x^{-1}]$ be the
set of Laurent polynomials in the variable $x$. For any polynomial
$p=\sum_{i=-m}^{n} p_{i}\, x^{i} \in L$, we denote by $p^{+}$ the sum
of terms with non-negative exponents: $p^{+}=\sum_{i=0}^{n} p_{i}\,
x^{i}$ and by $p^{-}$, the sum of terms with strictly negative
exponents: $p^{-}=\sum_{i=-m}^{-1} p_{i}\, x^{i}$. We have $p=p^{+}
+p^{-}$.

For $n\in \NN$, we denote by $\Unit{n}=\{\omega; \omega^{n}=1\}$ the set of roots of unity of order $n$.

For a vector $u=(u_0,\dots,u_{k-1})^T\in\kk^k$, we denote by $u(x)$
the polynomial of degree $k-1$ given by
$u(x)=\sum_{i=0}^{k-1}u_ix^i$. Conversely, if
$v(x)=\sum_{i=0}^{k-1}v_ix^i$ is a polynomial of degree $k-1$, we
denote by $v$ the vector of length $k$ of coefficients of $v(x)$.

If no confusion arises, we may also use $v$ to denote the polynomial $v(x)$.

\section{Sygygies and Toeplitz matrices}
Let $T\in\kk^{n\times n}$ be an $n\times n$ Toeplitz matrix. Then $T$
is of the following form: 
\begin{equation}
 \begin{pmatrix}
t_0&t_{-1}&\dots&t_{-n+1}\\
t_{1}&t_0&\ddots&\vdots\\
\vdots&\ddots&\ddots&t_{-1}\\
t_{n-1}&\dots&t_{1}&t_0
\end{pmatrix}.
\end{equation}
Let $g=(g_0,\dots,g_{n-1}) \in \kk^{n}$ be a vector of length $n$. We
are interested in the following problem:
\begin{prob}\label{pb:initial}
 Find $u=(u_0,\dots,u_{n-1}) \in
\kk^{n}$ such that 
\begin{equation}\label{pb:toep}
T\,u=g.
\end{equation}
\end{prob}

\begin{defn}
Let $E=\{1,\dots,x^{n-1}\}$, and $\Pi_{E}$ be the projection of $L$ on the vector space generated by $E$, along $\<x^{n},x^{n+1},\ldots\>$.   
\end{defn}
\begin{defn}
From the matrix $T$ and the vectors $g$ and $u$ we define the following polynomials: 
\begin{itemize}
 \item $T(x)=\displaystyle\sum _{i=-n+1}^{n-1}t_ix^i,$
 \item $\tilde{T}(x)=\displaystyle\sum_{i=0}^{2n-1}\tilde{t}_ix^i$ with
$\tilde{t}_i=\left\{
\begin{array}{ll} t_i&\textrm{ if } i< n\\ 
t_{i-2n}&\textrm{ if } i\ge n
\end{array}\right.$,
 \item  $u(x)=\displaystyle\sum_{i=0}^{n-1}u_ix^i,\:g(x)=\sum_{i=0}^{n-1}g_ix^i$.
\end{itemize}
\end{defn}

\noindent{}Notice that $T(x)$ is a Laurent polynomial and that
$\tilde{T}(x)$ is a polynomial of degree $2n-1$. By construction, we have the following properties:
\begin{prop}
$\tilde{T}= T^{+} + x^{2\,n}\, T^{-}$ and $T(w)=\tilde{T}(w)$ if $w \in \Unit{2\,n}$.
\end{prop}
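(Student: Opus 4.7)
The plan is to verify both assertions by direct computation from the definitions, with essentially no machinery needed. First I would decompose $T(x)=T^{+}(x)+T^{-}(x)$ with $T^{+}(x)=\sum_{i=0}^{n-1}t_{i}x^{i}$ and $T^{-}(x)=\sum_{i=-n+1}^{-1}t_{i}x^{i}$, as dictated by the notation for the positive and strictly negative parts of a Laurent polynomial introduced in the preliminaries.

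Next I would split the sum defining $\tilde{T}$ at the index $i=n$ and apply the two cases of the piecewise definition of $\tilde{t}_{i}$. The block $0\le i\le n-1$ reproduces $T^{+}(x)$ coefficient by coefficient. For the block $n\le i\le 2n-1$, the substitution $j=i-2n$ rewrites the sum as $x^{2n}\sum_{j=-n}^{-1}t_{j}x^{j}$. Since the entry $t_{-n}$ does not appear in the Toeplitz matrix $T$ and is therefore taken to be $0$, this tail coincides with $x^{2n}T^{-}(x)$. Assembling the two blocks yields the identity $\tilde{T}=T^{+}+x^{2n}T^{-}$.

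For the second assertion, I would simply evaluate the just-established identity at any $w\in\Unit{2n}$. The defining relation $w^{2n}=1$ collapses $w^{2n}T^{-}(w)$ to $T^{-}(w)$, so $\tilde{T}(w)=T^{+}(w)+T^{-}(w)=T(w)$.

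The argument is essentially bookkeeping, and the only subtlety worth flagging is the convention $t_{-n}=0$ at the boundary $i=n$, needed to reconcile the range of summation in $T^{-}$ (which starts at $-n+1$) with the fact that the piecewise definition of $\tilde{t}_{i}$ produces a formal $t_{-n}$ coefficient at $x^{n}$. Once this is acknowledged, neither step poses a real obstacle.
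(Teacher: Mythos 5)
Your proof is correct and follows essentially the same route as the paper's: a direct coefficient-by-coefficient comparison giving $\tilde{T}=T^{+}+x^{2n}T^{-}$, then evaluation at $w\in\Unit{2n}$ using $w^{2n}=1$. Your remark about the boundary convention $t_{-n}=0$ at the index $i=n$ is a point the paper's one-line proof glosses over, and it is a worthwhile clarification, but it does not change the argument.
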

\begin{proof}
We can deduce directly, from the definition of $T(x)$ and $\Tl(x)$, that $\tilde{T}= T^{+} + x^{2n}\, T^{-}$. Moreover, since $w^{2n}=1$ and  $\Tl(x)=T^{+}(x)+x^{2n}T^{-}(x)$, then $\Tl(w)=T^{+}(w)+T^{-}(w)=T(w)$.
\end{proof}

According to Proposition $2.1.2$ of \cite{MR1762401}, we have the following relation between the problem \ref{pb:initial} and polynomials:
\begin{prop}\label{Toepolyn}
We have
$$
T\,u=g\Leftrightarrow\Pi_{E}(T(x)u(x))=g(x).
$$
\end{prop}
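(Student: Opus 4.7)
The plan is to compute the polynomial product $T(x)u(x)$ coefficient by coefficient and then read off which coefficients survive under $\Pi_E$.

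First I would write out the product explicitly:
\begin{equation*}
T(x)\,u(x)=\Bigl(\sum_{k=-n+1}^{n-1} t_k\,x^k\Bigr)\Bigl(\sum_{j=0}^{n-1} u_j\,x^j\Bigr)=\sum_{i=-n+1}^{2n-2}\Bigl(\sum_{j} t_{i-j}\,u_j\Bigr)x^i,
\end{equation*}
where the inner sum runs over $j$ satisfying $0\le j\le n-1$ and $-n+1\le i-j\le n-1$.

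Next I would restrict attention to $0\le i\le n-1$, which is exactly the index range kept by $\Pi_E$. The key observation is that on this range the two pairs of constraints on $j$ collapse: the conditions $i-j\le n-1$ and $i-j\ge -n+1$ are automatically satisfied whenever $0\le j\le n-1$ and $0\le i\le n-1$ (since $|i-j|\le n-1$). Hence the coefficient of $x^i$ in $\Pi_E(T(x)u(x))$ equals $\sum_{j=0}^{n-1} t_{i-j}\,u_j$, which is precisely the $i$-th entry of $Tu$ by the definition of the Toeplitz matrix $T=(t_{i-j})_{0\le i,j\le n-1}$.

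Finally I would compare this with $g(x)=\sum_{i=0}^{n-1}g_i\,x^i$: the two polynomials in $\kk[x]_{n-1}$ are equal if and only if their coefficients coincide, i.e.\ if and only if $(Tu)_i=g_i$ for all $i\in\{0,\dots,n-1\}$, which is the equation $Tu=g$. Both implications follow at once since no truncation is lost. The only thing to be careful about is the index bookkeeping in the bilateral sum defining $T(x)$; once one sees that the window $0\le i\le n-1$ eliminates the boundary cases in $k=i-j$, there is no real obstacle, and this is why the authors invoke the result as a direct reference to \cite{MR1762401}.
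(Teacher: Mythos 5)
Your proof is correct: the coefficient-by-coefficient computation shows exactly that the coefficient of $x^i$ in $T(x)u(x)$ for $0\le i\le n-1$ is $\sum_{j=0}^{n-1}t_{i-j}u_j=(Tu)_i$, which is the content of the proposition. The paper itself gives no argument and simply cites Proposition 2.1.2 of \cite{MR1762401}; your direct verification is the standard reasoning behind that reference and fills in the omitted details correctly, including the observation that the index window $0\le i\le n-1$ makes the constraints $-n+1\le i-j\le n-1$ automatic.
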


As $\Pi_{E}(T(x)u(x))$ is the polynomial $T(x)u(x)$ from which we remove terms of negative degree and of degree $\geq n$, then we can write $T(x)\, u(x)$ as following:
\begin{prop}\label{transf}
\begin{equation}
T(x)\, u(x) = \Pi_{E}(T(x)u(x))+ x^{-n} A(x) + x^{n} B(x),
\end{equation}
where $A(x)\in\kk[x]_{n-1}$ and $B(x)\in\kk[x]_{n-2}$.
\end{prop}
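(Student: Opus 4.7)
The plan is to prove this by straightforward degree bookkeeping on the Laurent polynomial $T(x)u(x)$. Since $T(x)=\sum_{i=-n+1}^{n-1}t_i x^i$ has exponents in the range $[-(n-1),n-1]$ and $u(x)=\sum_{i=0}^{n-1}u_i x^i$ has exponents in $[0,n-1]$, the product $T(x)u(x)$ is a Laurent polynomial whose exponents lie in the interval $[-(n-1),\, 2n-2]$. Write
\[
T(x)\,u(x) = \sum_{i=-(n-1)}^{2n-2} c_i\, x^i
\]
for suitable coefficients $c_i \in \kk$.

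Next I would split this sum into three blocks according to the exponent $i$: the block of strictly negative exponents $i \in \{-(n-1),\ldots,-1\}$, the block of exponents $i \in \{0,\ldots,n-1\}$, and the block of exponents $i \in \{n,\ldots,2n-2\}$. By the definition of $\Pi_E$ as projection onto $\langle 1,x,\ldots,x^{n-1}\rangle$, the middle block is exactly $\Pi_E(T(x)u(x))$. Therefore
\[
T(x)\,u(x) = \sum_{i=-(n-1)}^{-1} c_i\, x^i \;+\; \Pi_E(T(x)u(x)) \;+\; \sum_{i=n}^{2n-2} c_i\, x^i.
\]

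Finally I would factor $x^{-n}$ from the first block and $x^{n}$ from the third block and check the degree bounds. For the negative part,
\[
\sum_{i=-(n-1)}^{-1} c_i\, x^i = x^{-n}\sum_{j=1}^{n-1} c_{j-n}\, x^{j} =: x^{-n} A(x),
\]
and $A(x)$ is a polynomial with exponents in $\{1,\ldots,n-1\}$, so $A\in\kk[x]_{n-1}$. For the high part,
\[
\sum_{i=n}^{2n-2} c_i\, x^i = x^{n}\sum_{j=0}^{n-2} c_{j+n}\, x^{j} =: x^{n} B(x),
\]
with $\deg B \le n-2$, so $B\in\kk[x]_{n-2}$. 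This yields exactly the claimed decomposition.

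There is no real obstacle here: the statement is purely a matter of reading off which monomials appear in $T(x)u(x)$ and grouping them by exponent range. The only thing one has to be careful about is the sharp endpoints of the two side blocks, which force the precise bounds $\deg A\le n-1$ and $\deg B\le n-2$ stated in the proposition.
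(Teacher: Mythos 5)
Your proof is correct and follows essentially the same argument as the paper: expand $T(x)u(x)$, note its exponents lie in $[-(n-1),2n-2]$, identify the middle block with $\Pi_E(T(x)u(x))$, and factor $x^{-n}$ and $x^{n}$ out of the two side blocks, which gives exactly the degree bounds $\deg A\le n-1$ and $\deg B\le n-2$. No issues.
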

\begin{proof}
By expanding $T(x)u(x)$ we can write
\begin{eqnarray*}
 T(x)u(x)&=&\Pi_{E}(T(x)u(x))+(\alpha_{-n+1}x^{-n+1}+\dots+\alpha_{-1}x^{-1}) + (\alpha_{n}x^n+\dots+\alpha_{2n-2}x^{2n-2})\\
&=&\Pi_{E}(T(x)u(x))+x^{-n}(\alpha_{-n+1}x+\dots+\alpha_{-1}x^{n-1}) + x^{n}(\alpha_{n}+\dots+\alpha_{2n-2}x^{n-2})\\
&=&\Pi_{E}(T(x)u(x))+ x^{-n} A(x) + x^{n} B(x)
\end{eqnarray*}
\end{proof}

Therefore, according to Proposition \ref{Toepolyn} and  Proposition \ref{transf}, if $u$ is solution of $Tu=g$ then there exist two polynomials $A(x)$ and $B(x)$ in $\kk[x]_{n-1}$ such that
\begin{equation}\label{Tsyz}
T(x)u(x)- x^{-n} A(x) - x^{n} B(x) = g(x).
\end{equation}
By evaluation at the roots $\omega \in \Unit{2n}$, and since $\omega^{-n}= \omega^{n}$ and $\Tl(\omega)=T(\omega)$ for $\omega\in \Unit{2n}$, we have
$$ 
\Tl(\omega) u(\omega) + \omega^{n} v(\omega) = g(\omega),
\forall \omega \in \Unit{2n}(\omega),
$$
where $v(x)= -A(x)-B(x)$ of degree $\le n-1$. Therefore the polynomial $\Tl(x) u(x) + x^{n} v(x) - g(x)$ is multiple of $x^{2n}-1$. We deduce that there exists $w(x)\in\kk[x]$ such that
\begin{equation}\label{TTsyz}
\Tl(x) u(x) + x^{n} v(x) + (x^{2n}-1) w(x)= g(x).
\end{equation}
Notice that $w(x)$ is of degree $\le n-1$ because $(x^{2n}-1)\, w(x)$ is of degree $\le 3n-1$.


\subsection{Syzygies}

The solutions of Equation \eqref{TTsyz} is a particular case of the
following problem, related to interesting questions in Effective Algebraic Geometry.
\begin{prob}\label{pb:mvlines}
Given three polynomials $a, b, c \in R$  respectively of degree $<l, <m, <n$,
find three polynomials  $p, q, r \in R$ 
of degree $< \nu-l, <\nu-m, <\nu-n$, such that
\begin{equation} \label{eq:mvlines}
a(x)\, p(x) + b(x)\, q(x) + c(x)\, r(x) =0.
\end{equation}
\end{prob}
The polynomial vector $(p,q,r) \in\kk[x]^{3}$ is called a {\em syzygy} of $(a,b,c)$.
We denote by $\ML(a,b,c)$ the set of syzygies $(p,q,r)\in\kk[x]^{3}$ of $(a,b,c)$,
i.e. the solutions of \eqref{eq:mvlines}. It is a $\kk[x]$-module of
$\kk[x]^{3}$ and it is called the module of syzygies of $(a,b,c)$. The
solutions of Problem \ref{pb:mvlines} are $\ML(a,b,c)
\cap\kk[x]_{\nu-l-1}\times\kk[x]_{\nu-m-1}\times\kk[x]_{\nu-n-1}$.

Given a new polynomial $d(x)\in \kk[x]$, we denote by $\ML(a,b,c;d)$ the set of $(p,q,r)\in\kk[x]^{3}$ such that 
\begin{equation}
a(x)\, p(x) + b(x)\, q(x) + c(x)\, r(x) = d(x).
\end{equation}

\begin{thm}
For any non-zero vector of polynomials $(a,b,c)\in \kk[x]^{3}$, the $\kk[x]$-module $\ML(a,b,c)$ is free of rank $2$.
\end{thm}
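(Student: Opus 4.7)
The plan is to exploit two facts: $\kk[x]$ is a principal ideal domain, and $\ML(a,b,c)$ sits inside the free module $\kk[x]^{3}$ as the kernel of a single linear form.

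First I would view $\ML(a,b,c)$ as the kernel of the $\kk[x]$-linear map
\[
\varphi : \kk[x]^{3} \longrightarrow \kk[x], \qquad (p,q,r) \longmapsto a\,p + b\,q + c\,r.
\]
Since $\kk[x]$ is a PID and every submodule of a finitely generated free module over a PID is itself free of rank at most the ambient rank, it will follow at once that $\ML(a,b,c)$ is free of some rank $r \le 3$. This is the structural input and requires no special argument beyond citing the standard theorem.

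Next I would pin down the rank by extending scalars to the field of fractions $\kk(x)$. Tensoring the short exact sequence
\[
0 \longrightarrow \ML(a,b,c) \longrightarrow \kk[x]^{3} \stackrel{\varphi}{\longrightarrow} \mathrm{Im}(\varphi) \longrightarrow 0
\]
with $\kk(x)$ (which is flat over $\kk[x]$) gives an exact sequence of $\kk(x)$-vector spaces. Because $(a,b,c) \ne 0$, at least one of $a,b,c$ is a nonzero element of $\kk(x)$, so the induced map $\kk(x)^{3} \to \kk(x)$ is surjective. Dimension counting then forces the kernel to have $\kk(x)$-dimension $2$, and the rank of the free $\kk[x]$-module $\ML(a,b,c)$ equals the dimension of $\ML(a,b,c) \otimes_{\kk[x]} \kk(x)$. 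Hence $r = 2$.

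There is no real obstacle here: the entire argument is formal once one invokes the PID structure theorem and flatness of localization. If one wished to avoid citing the structure theorem, one could instead produce two explicit independent syzygies (for instance $(b,-a,0)$ and $(c,0,-a)$ when $a \ne 0$) and argue directly that these generate the whole module up to division by $\gcd(a,b,c)$; but the abstract route above is cleaner and suffices for the statement, so that is what I would write.
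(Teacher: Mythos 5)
Your proof is correct and follows essentially the same route as the paper's: the paper obtains freeness from Hilbert's theorem that the ideal $(a,b,c)\subset\kk[x]$ admits a free resolution of length one (which in one variable is exactly your citation that submodules of free modules over a PID are free), and it fixes the rank by the same alternating-rank count along the exact sequence that your tensoring with $\kk(x)$ merely makes explicit. The only shaky point is your closing parenthetical: the two Koszul syzygies $(b,-a,0)$ and $(c,0,-a)$ need not generate $\ML(a,b,c)$ even after dividing by $\gcd(a,b,c)$ (take $a=b=x$, $c=1$: the syzygy $(1,-1,0)$ is not a $\kk[x]$-combination of $(x,-x,0)$ and $(1,0,-x)$), but since your argument does not rely on that aside, the proof stands.
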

\begin{proof}
By the Hilbert's theorem, the ideal $I$ generated by $(a,b,c)$ has a
free resolution of length at most $1$ (see \cite[chap. 6]{MR1639811}), that is of the form:
$$
0\rightarrow\kk[x]^p\rightarrow \kk[x]^3\rightarrow \kk[x] \rightarrow \kk[x]/I \rightarrow 0.
$$ 
As $I\neq 0$, for dimensional reasons, we must have $p-3+1=0$, then $p=2$.
\end{proof}

\begin{defn}
For a polynomial vector $p=(p_1,\dots,p_k)\in\,\kk[x]^k$, we
define $$\deg(p_{1}, \ldots, p_{k})=\max(\deg(p_1),\dots,\deg(p_k)).$$
\end{defn}

\begin{defn} Assume that $\deg (p,q,r)\leq \deg(p',q',r')$. A $\mu$-base of $\ML(a,b,c)$ is a basis $\{(p,q,r),\,(p',q',r')\}$ of $\ML(a,b,c)$, with $\deg(p,q,r)=\mu$.
\end{defn}

We have the following relation between the degrees of the two elements of a basis of $\ML(a,b,c)$:
\begin{prop}\label{degree}
Let $\{(p_{1},q_{1},r_{1}),\,(p_{2},q_{2},r_{2})\}$ be a basis of $\ML(a,b,c)$, $\mu_1=\deg(p_1,q_1,r_1)$ and $\mu_2=\deg(p_2,q_2,r_2)$. We have $\deg(a,b,c)=\mu_1+\mu_2$.
\end{prop}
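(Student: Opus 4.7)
The plan is to compute $\dim\bigl(\ML(a,b,c)\cap\kk[x]_N^3\bigr)$ in two ways and compare leading asymptotics. Set $\nu=\deg(a,b,c)$. I will work under the natural standing hypothesis $\gcd(a,b,c)=1$ and the (implicit) minimality of the basis: the top-degree coefficient vectors $\vb_1,\vb_2\in\kk^3$ of $(p_1,q_1,r_1)$ and $(p_2,q_2,r_2)$ are $\kk$-linearly independent. Otherwise one would have $\vb_2=\alpha\vb_1$ and the change of basis $(p_2,q_2,r_2)\mapsto(p_2,q_2,r_2)-\alpha x^{\mu_2-\mu_1}(p_1,q_1,r_1)$ would strictly lower $\mu_2$, contradicting that $\mu_1+\mu_2$ is the sum attached to this basis.

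First I would consider the evaluation map
$$
\phi_N\colon\kk[x]_N^3\longrightarrow\kk[x]_{N+\nu},\qquad (p,q,r)\mapsto a\,p+b\,q+c\,r,
$$
whose kernel is exactly $\ML(a,b,c)\cap\kk[x]_N^3$. Using the Bezout identity $a\alpha+b\beta+c\gamma=1$ together with reduction of arbitrary liftings modulo $(p_1,q_1,r_1)$ and $(p_2,q_2,r_2)$, one shows $\phi_N$ is surjective for every $N$ large enough, so
$$
\dim\ker\phi_N=3(N+1)-(N+\nu+1)=2N+2-\nu.
$$

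Second, via the basis, any syzygy is uniquely of the form $f(p_1,q_1,r_1)+g(p_2,q_2,r_2)$ with $f,g\in\kk[x]$, and the linear independence of $\vb_1,\vb_2$ precludes cancellation of top-degree terms, so
$$
\deg\bigl(f(p_1,q_1,r_1)+g(p_2,q_2,r_2)\bigr)=\max(\deg f+\mu_1,\,\deg g+\mu_2).
$$
This quantity is $\le N$ iff $f\in\kk[x]_{N-\mu_1}$ and $g\in\kk[x]_{N-\mu_2}$, giving a kernel dimension of $(N-\mu_1+1)+(N-\mu_2+1)=2N+2-\mu_1-\mu_2$ for every $N\ge\mu_2$. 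Comparing the two counts yields $\mu_1+\mu_2=\nu$.

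The main obstacle is the degree-additivity step: without the linear independence of $\vb_1$ and $\vb_2$ the formula becomes an inequality and the sum $\mu_1+\mu_2$ ceases to be invariant under change of basis, so the identity can genuinely fail (replacing $(p_2,q_2,r_2)$ by $(p_2,q_2,r_2)+x(p_1,q_1,r_1)$ would, for example, raise $\mu_1+\mu_2$ by one). Securing this independence—equivalently, working with a minimal basis—is the heart of the argument; everything else is a routine Hilbert-function calculation. An alternative route via Hilbert--Burch writes $(a,b,c)$ as a scalar multiple of the $2\times 2$ minors of the matrix $[(p_1,q_1,r_1)\,|\,(p_2,q_2,r_2)]$ and reduces the equality to the non-vanishing of the top $2\times 2$ minor in the leading-coefficient matrix, which is again the same linear independence.
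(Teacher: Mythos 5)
Your argument is correct and is in substance the same as the paper's: your two computations of $\dim\bigl(\ML(a,b,c)\cap\kk[x]_N^3\bigr)$ (rank--nullity for the surjective evaluation map, versus the count through the basis using the predictable-degree property) are exactly the alternating-sum-of-dimensions count that the paper performs on the truncated sequence $0\to\kk[x]_{N-\mu_1}\oplus\kk[x]_{N-\mu_2}\to\kk[x]_{N}^3\to\kk[x]_{N+\nu}\to \kk[x]_{N+\nu}/(a,b,c)_{N+\nu}\to 0$ for $N\gg0$. The hypotheses you make explicit --- $\gcd(a,b,c)=1$ and linear independence of the top-degree coefficient vectors of the two basis elements (without which the claim indeed fails, e.g.\ for the basis obtained by adding $x$ times the first generator to the second) --- are precisely what the paper uses implicitly to get exactness in the middle and vanishing of the cokernel in high degree, so your write-up is a more careful rendering of the same Hilbert-function argument rather than a different route.
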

\begin{proof}
We have
\begin{equation*}
0 \rightarrow \kk[x]_{\nu-d-\mu_1} \oplus \kk[x]_{\nu-d-\mu_2} \rightarrow \kk[x]_{\nu-d}^3\rightarrow \kk[x]_{\nu} \rightarrow \kk[x]_{\nu}/(a,b,c)_{\nu} \rightarrow 0,
\end{equation*}
for $\nu \gg 0$. As the alternate sum of the dimension of the $\kk$-vector spaces is zero and $\kk[x]_{\nu}/(a,b,c)_{\nu}$ is $0$ for $\nu \gg 0$, we have
$$ 
0 = 3\,(d-\nu-1)  +\nu -\mu_1- d +1 + \nu -\mu_2 -d +1 + \nu +1 =  d -\mu_1 -\mu_2.
$$
\end{proof}

\subsection{The module $\ML(\Tl(x), x^{n}, x^{2n}-1)$}
Returning to the initial problem, we saw that if $u$ is solution of $Tu=g$ then there exist two polynomials $v(x)$ and $w(x)$ in $\kk[x]_{n-1}$ such that $(u(x),v(x),w(x))$ $\in\ML(\Tl(x), x^{n}, x^{2n}-1;g(x))$.

By the proposition \ref{degree}, if $(p,q,r)$ and$(p',q',r')$ form a
basis of $\ML(\Tl(x), x^{n}, x^{2n}-1)$ of degree $\mu_1$ and $\mu_2$
respectively then we have $\mu_1+\mu_2=2\,n$.
We are going to show now that in fact $\ML(\Tl(x), x^{n}, x^{2n}-1)$
has a $n$-basis, that is a basis of two elements of degree $\mu_1=\mu_2=n$: 
\begin{prop}\label{prop:ML}
The $\kk[x]$-module $\ML(\Tl(x), x^{n}, x^{2n}-1)$ has a $n$-basis.
\end{prop}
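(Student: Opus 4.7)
The plan is to show the minimal degree $\mu_1$ of a non-zero element of $\ML(\Tl(x), x^n, x^{2n}-1)$ equals $n$; combined with $\mu_1 + \mu_2 = 2n$ from Proposition~\ref{degree} and $\mu_1 \le \mu_2$, this forces $\mu_1 = \mu_2 = n$ and produces an $n$-basis. Since the degree sum already gives $\mu_1 \le n$, only the reverse inequality needs work: I will rule out non-trivial syzygies $(p,q,r)$ with all components of degree $\le n-1$. I would work under the standing hypothesis that $T$ is invertible (which is implicit in the problem setting, and in fact necessary---a non-trivial $p \in \ker T$, together with the corresponding $A, B$ from Proposition~\ref{transf}, produces a low-degree syzygy).

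Given such a putative low-degree syzygy $(p,q,r)$, the bridge to the Toeplitz action comes from Proposition~\ref{transf} applied to $p$. Write $T(x)p(x) = (Tp)(x) + x^{-n}A(x) + x^n B(x)$ with $A, B \in \kk[x]_{n-1}$, where $(Tp)(x)$ is the polynomial associated to the vector $Tp \in \kk^n$. Evaluating at $\omega \in \Unit{2n}$, and using $\omega^{-n} = \omega^n$ together with $\Tl(\omega) = T(\omega)$, the identity $\Tl(\omega)p(\omega) = (Tp)(\omega) + \omega^n(A(\omega)+B(\omega))$ holds at every $2n$-th root of unity, so $x^{2n}-1$ divides the difference, yielding
\[
\Tl(x)p(x) = (Tp)(x) + x^n(A(x)+B(x)) + (x^{2n}-1)\,s(x)
\]
for some $s \in \kk[x]$. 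Substituting into the syzygy identity $\Tl p + x^n q + (x^{2n}-1)r = 0$ produces $(Tp)(x) + x^n(A+B+q)(x) + (x^{2n}-1)(s+r)(x) = 0$. Since the first two summands have degree $\le 2n-1$ while a non-zero third would contribute in degrees $\ge 2n$, one concludes $s+r = 0$; the remaining identity then has disjoint support in $[0, n-1]$ and $[n, 2n-1]$, so $Tp = 0$ and $A+B+q = 0$ separately. Invertibility of $T$ gives $p=0$; Proposition~\ref{transf} applied to $p=0$ forces $A = B = 0$ and hence $q = 0$; the syzygy identity then collapses to $(x^{2n}-1)r = 0$, so $r=0$.

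The main obstacle is the decomposition step: recognizing $(Tp)(x) + x^n(A+B)(x)$ as the canonical representative of $\Tl(x)p(x)$ modulo $x^{2n}-1$ is what decouples the syzygy relation into the Toeplitz equation $Tp = 0$ (which invertibility then kills) and the auxiliary identity $A+B+q = 0$. Without this Chinese-remainder-style splitting, the degree bookkeeping of the syzygy identity does not separate the ``Toeplitz information'' carried by $p$ from the ancillary data $A, B, q$, and one cannot invoke invertibility of $T$ to conclude $p = 0$.
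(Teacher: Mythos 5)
Your proof is correct, and its skeleton coincides with the paper's: show that there is no non-zero syzygy all of whose components have degree at most $n-1$, then combine this with $\mu_1+\mu_2=2n$ from Proposition~\ref{degree} and freeness of rank $2$ to force $\mu_1=\mu_2=n$. The difference is in how that key step is justified. The paper encodes the degree-$(n-1)$ syzygy condition as the kernel of the $3n\times 3n$ matrix $S$ in \eqref{form:S} and kills it in one stroke by a block row reduction, replacing $(T_{0}\,|\,\mathbf{0}\,|\,-\II_{n})$ by $(T_{0}+T_{2}\,|\,\mathbf{0}\,|\,\mathbf{0})$ and invoking the invertibility of $T=T_{0}+T_{2}$. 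You never write the matrix down: instead you reduce $\Tl(x)p(x)$ modulo $x^{2n}-1$ to the canonical form $(Tp)(x)+x^{n}(A+B)(x)$ via Propositions~\ref{Toepolyn} and~\ref{transf} and the roots-of-unity evaluation (in effect re-running the derivation of \eqref{TTsyz} in reverse), then separate degrees to extract $s+r=0$, $Tp=0$ and $A+B+q=0$, and conclude from the invertibility of $T$; your degree bookkeeping (the uncancellable leading term of $x^{2n}(s+r)$, then the disjoint supports $[0,n-1]$ and $[n,2n-1]$) is sound, and the final step $p=0\Rightarrow A=B=0\Rightarrow q=0\Rightarrow r=0$ is also fine. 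Mathematically your splitting is exactly the polynomial-language counterpart of the identity $T=T_{0}+T_{2}$ exploited in the paper, so the two proofs rest on the same fact; the matrix formulation is shorter and has the side benefit of setting up the block structure of $S$ that the paper reuses in the proof of Theorem~\ref{division}, while your version stays entirely at the level of polynomial identities and makes explicit (as you note) that invertibility of $T$ is not only sufficient but necessary for the absence of low-degree syzygies. One small phrasing point: rather than speaking of ``the minimal degree of a non-zero element,'' conclude as the paper does that the absence of non-zero syzygies of degree $\le n-1$ forces both basis degrees to be at least $n$, hence both equal to $n$; with that reading your argument is complete.
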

\begin{proof}
Consider the linear map
\vspace{-0.5cm}
\begin{eqnarray}\label{syzfunction1}
\kk[x]_{n-1}^3 &\rightarrow & \kk[x]_{3n-1}\\
(p(x),q(x),r(x))  & \mapsto & \Tl(x) p(x) + x^{n} q(x) +(x^{2n}-1) r(x),\nonumber
\end{eqnarray}
which $3n \times 3n$ matrix is of the form 
\begin{equation}\label{form:S}
S:= \left(
\begin{array}{c|c|c}
T_{0} &\mathbf{0}  &  -\II_{n} \\ 
T_{1} & \II_{n}    &  \mathbf{0} \\
T_{2} & \mathbf{0} &  \ \, \II_{n}  \\ 
\end{array}
\right),
\end{equation}
where $T_{0}, T_{1}, T_{2}$ are the coefficient matrices of $(\Tl(x)$, $x\, \Tl(x)$, $\ldots,$ $x^{n}\Tl(x))$, respectively for the
list of monomials $(1,\ldots,x^{n-1})$, $(x^{n},\ldots,x^{2n-1})$, $(x^{2n},\ldots, x^{3n-1})$. Notice in particular that $T= T_{0}+T_{2}$.

Reducing the first block $(T_{0}| \mathbf{0} | -\II_{n})$ by the last block $(T_{2}| \mathbf{0} | \II_{n})$, we replace it by the block
$(T_{0}+T_{2}| \mathbf{0} | \mathbf{0})$, without changing the rank of $S$. As $T=T_{0}+T_{2}$ is invertible, this shows that the matrix
$S$ is of rank $3n$. Therefore $\ker (S)=0$ and there is no syzygies in degree $n-1$.

As the sum $2n=\mu_1+\mu_2$, where $\mu_1,\mu_2$ are the degrees of a pair of  generators of $\ML(\Tl(x), x^{n}, x^{2n}-1)$, and as $\mu_1\geq n$ and $\mu_2\geq n$, we have $\mu_1=\mu_2=n$. Moreover, $\ML(\Tl(x), x^{n}, x^{2n}-1)$ is free of rank $2$. Thus there exist two linearly independent syzygies $(u_1,v_1,w_1)$, $(u_2,v_2,w_2)$ of degree $n$, which generate $\ML(\Tl(x), x^{n}, x^{2n}-1)$.
\end{proof}

A similar result can also be found in \cite{MR1871324}, but the proof much longer than this one, is based on interpolation techniques and explicit computations.

Let us now describe how to construct explicitly two generators 
$(u_1,v_1,w_1)$, $(u_2,v_2,w_2)$ of $\ML(\Tl(x), x^{n}, x^{2n}-1)$ of
degree $n$. As $\Tl(x)$ is of degree $\le 2\,n -1$ and the map \eqref{syzfunction1} is surjective, there exists $(u,v,w) \in \kk[x]_{n-1}^3$ such
that \begin{equation}\label{base1} \Tl(x) u(x) + x^n v(x) +
  (x^{2\,n}-1)\, w = \Tl(x) x^n.
\end{equation}
We deduce that $(u_1,v_1,w_1)=(x^n-u, -v, -w) \in \ML(\Tl(x), x^{n}, x^{2n}-1)$.

Since there exists $(u',v',w') \in \kk[x]_{n-1}^3$ such that 
\begin{equation}\label{base2}
\Tl(x) u'(x) + x^n v'(x) + (x^{2\,n}-1)\, w' =1 = x^n\, x^n - (x^{2\,n}-1),
\end{equation}
we deduce that $(u_2,v_2,w_2)=(-u',x^n -v', -w' - 1) \in \ML(\Tl(x), x^{n}, x^{2n}-1)$.

Now, T and are linearly independent since by construction, 
The coefficient vectors of $x^{n}$ in $(u_1,v_1,w_1)$ and
$(u_2,v_2,w_2)$ are respectively $(1,0,0)$ and $(0,1,0)$, which shows
that vectors $(u_1,v_1,w_1)$, $(u_2,v_2,w_2)\in \ML(
\Tl(x),x^{n},x^{2n}-1)\cap \kk[x]_{n}$ are linearly independent.
Therefore, they form a basis of $\ML(\Tl(x),x^{n},x^{2n}-1)$.

Now we can prove our aim theorem:
\begin{thm}\label{division}
The vector $u$ is solution of \eqref{pb:toep}  if and only if  there exist $v(x)$ and $w(x)$ in $\kk[x]_{n-1}$ such that 
$$
(u(x), v(x), w(x)) \in \ML(\tilde{T}(x), x^{n}, x^{2n}-1; g(x) )
$$
\end{thm}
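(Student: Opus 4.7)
The plan is to split the biconditional and prove each direction separately. The forward implication $(\Rightarrow)$ has essentially been carried out already in the exposition leading up to the theorem, so I would just reassemble those steps; the reverse implication $(\Leftarrow)$ needs a short additional argument, cleanest via the block structure of the matrix $S$ from \eqref{form:S}.

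For $(\Rightarrow)$, starting from $Tu = g$, I would first apply Proposition \ref{Toepolyn} to get $\Pi_E(T(x)u(x)) = g(x)$, and then Proposition \ref{transf} to write $T(x)u(x) = g(x) + x^{-n}A(x) + x^n B(x)$ with $A \in \kk[x]_{n-1}$ and $B \in \kk[x]_{n-2}$. Setting $v(x) := -A(x) - B(x)$ and evaluating at any $\omega \in \Unit{2n}$, the identities $\omega^{-n} = \omega^n$ and $\tilde{T}(\omega) = T(\omega)$ force $\tilde{T}(\omega)u(\omega) + \omega^n v(\omega) = g(\omega)$. Hence $\tilde{T}(x)u(x) + x^n v(x) - g(x)$ vanishes on $\Unit{2n}$, so it equals $-(x^{2n}-1)\, w(x)$ for some $w \in \kk[x]$; a routine degree count (the left side has degree at most $3n-2$) then places both $v$ and $w$ in $\kk[x]_{n-1}$.

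For $(\Leftarrow)$, given $(u,v,w) \in \kk[x]_{n-1}^3$ satisfying $\tilde{T}(x)u(x) + x^n v(x) + (x^{2n}-1)w(x) = g(x)$, I would rewrite this as the coefficient-vector identity $S\,(u,v,w)^{T} = (g,0,0)^{T}$ in $\kk^{3n}$, using the block form \eqref{form:S}. Splitting into three blocks by monomial-degree range yields $T_0 u - w = g$ (degrees $0$ to $n-1$), $T_1 u + v = 0$ (degrees $n$ to $2n-1$), and $T_2 u + w = 0$ (degrees $2n$ to $3n-1$). Summing the first and third block equations eliminates $w$ and, by the identity $T = T_0 + T_2$ noted just after \eqref{form:S}, gives $T u = g$. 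There is no real obstacle here; the key observation is simply that the block structure of $S$ already encodes $T = T_0 + T_2$, so in fact $v$ and $w$ are uniquely determined by $u$ via $v = -T_1 u$ and $w = -T_2 u$, and the biconditional follows directly from \eqref{form:S}.
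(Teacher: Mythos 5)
Your proof is correct and follows essentially the same route as the paper: the forward direction reassembles the derivation via Propositions \ref{Toepolyn} and \ref{transf}, evaluation on $\Unit{2n}$ and divisibility by $x^{2n}-1$, exactly as in the text preceding the theorem, and the converse uses the block system $S\,(u,v,w)^{T}=(g,0,0)^{T}$ with $T=T_{0}+T_{2}$, which is precisely the paper's argument. The only addition is your explicit remark that $v=-T_{1}u$ and $w=-T_{2}u$, which is fine but not needed.
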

\begin{proof}
If $u$ is solution of \eqref{pb:toep}, we see that there exist $v(x)\in\kk[x]_{n-1}$ and $w(x)\in\kk[x]_{n-1}$ such that
$$ 
\Tl(x) u(x) + x^{n} v(x) + (x^{2n}-1) w(x)= g(x).
$$

Conversely,  a solution $(u(x), v(x), w(x)) \in
\ML(\tilde{T}(x),x^{n},x^{2n}-1; g(x) )\cap \kk[x]_{n-1}^{3}$ implies
that $(u,v,w)\in \kk^{3\,n}$ is a solution of the 
linear system:
$$ 
S 
\, \left(
\begin{array}{c}
u\\
v\\
w\\
\end{array}
\right)
=
\left(
\begin{array}{c}
g\\
0\\
0\\
\end{array}
\right),
$$
where $S$ is has the block structure \eqref{form:S}, so that $T_{2}\, u + w =0$ and $T_{0}\, u - w = (T_{0}+T_{2}) u=g$. As we have $T_{0}+T_{2}=T$, the vector $u$ is a solution of \eqref{pb:toep}, which ends the proof of the theorem.
\end{proof}

Computing the inverse of a Toeplitz matrix $T$ is equivalent to computing
the first and the last column of $T^{-1}$, based on 
Gohberg-Semencul decomposition  (see \cite{LDRHR84,MR1179345,MR1491603,LDRKC89} for more details about
  Gohberg-Semencul decomposition). 

We are going to show that the solutions of Equations
\eqref{base1} and \eqref{base2} which gives us
the $n$-basis $\{(u_1,v_1,w_1),(u_2,v_2,w_2)\}$ is related to the
solution of two specific Toeplitz linear systems.

\begin{prop}\label{prop:2.15}
Let $(u(x),v(x),w(x))$ and $(u'(x),v'(x),w'(x))$ be in $\kk_{n-1}[x]^3$ such that
$$
\left\{\begin{array}{l}
\Tl(x) u(x) + x^n v(x) + (x^{2\,n}-1)\, w(x) = \Tl(x) x^n,\\
\Tl(x) u'(x) + x^n v'(x) + (x^{2\,n}-1)\, w'(x) =1.
\end{array}\right.
$$
Then $Tu'=e_1$ and $Tu=ZTe_n$, with $Z$ is the lower shift matrix of size $n$.
\end{prop}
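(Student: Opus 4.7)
The plan is to apply the block decomposition of $S$ from \eqref{form:S}, together with the identity $T=T_{0}+T_{2}$, in exactly the same spirit as the proof of Theorem~\ref{division}. The observation that makes both cases routine is the following generic statement: for any equation
$$
\Tl(x)\,p(x)+x^{n}q(x)+(x^{2n}-1)r(x)=d(x)
$$
with $(p,q,r)\in\kk[x]_{n-1}^{3}$, if we split the coefficient vector of $d$ into three blocks $(d_{0},d_{1},d_{2})\in\kk^{n}\times\kk^{n}\times\kk^{n}$ corresponding to degrees $[0,n-1]$, $[n,2n-1]$, $[2n,3n-1]$, then reading the three row-blocks of $S(p,q,r)^{T}=(d_{0},d_{1},d_{2})^{T}$ gives
$$
T_{0}p-r=d_{0},\qquad T_{1}p+q=d_{1},\qquad T_{2}p+r=d_{2},
$$
so adding the first and the third yields $T\,p=d_{0}+d_{2}$. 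I would just record this once.

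For the second equation, $d(x)=1$ gives $d_{0}=e_{1}$ and $d_{1}=d_{2}=0$, so the lemma produces $T\,u'=e_{1}$ without further work.

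For the first equation the only real task is to identify $d_{0}+d_{2}$ when $d(x)=\Tl(x)\,x^{n}$. I would expand $\Tl(x)\,x^{n}=\sum_{i=0}^{2n-1}\tilde{t}_{i}\,x^{i+n}$ and read off the blocks directly: $d_{0}=0$ (there are no terms of degree $<n$), the block $d_{1}=(\tilde{t}_{0},\dots,\tilde{t}_{n-1})^{T}=(t_{0},\dots,t_{n-1})^{T}$, and the block $d_{2}=(\tilde{t}_{n},\tilde{t}_{n+1},\dots,\tilde{t}_{2n-1})^{T}=(0,t_{-n+1},\dots,t_{-1})^{T}$, where the leading $0$ comes from the convention $t_{-n}=0$ inherent in the decomposition $\Tl=T^{+}+x^{2n}T^{-}$. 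Hence $T\,u=d_{0}+d_{2}=(0,t_{-n+1},\dots,t_{-1})^{T}$.

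The final step is simply the recognition that this vector equals $Z\,T\,e_{n}$: the last column of $T$ is $T\,e_{n}=(t_{-n+1},\dots,t_{-1},t_{0})^{T}$, and applying the lower shift $Z$ drops the last entry and inserts a $0$ on top. I don't anticipate any serious obstacle; the only step that requires some attention is making sure the coefficient of $x^{2n}$ in $\Tl(x)\,x^{n}$ is indeed $0$ (i.e.\ keeping track of the missing $t_{-n}$ term), after which the coincidence with the shifted last column of $T$ is immediate.
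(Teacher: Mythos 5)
Your proof is correct, but it takes a somewhat different route from the paper's. The paper handles both equations by reducing each to Theorem~\ref{division}: for $u'$ the right-hand side $1$ already has degree $\le n-1$, so the theorem applies at once; for $u$ it rewrites $\Tl(x)\,x^n$ using $\Tl=T^{+}+x^{2n}T^{-}$, absorbing $T^{+}$ into the $x^n$-component and $x^nT^{-}$ into the $(x^{2n}-1)$-component, so that the new right-hand side is $x^nT^{-}(x)$, a polynomial of degree $\le n-1$ whose coefficient vector is exactly $ZTe_n$, and then invokes Theorem~\ref{division} again. You instead generalize the linear-algebra step hidden inside the proof of Theorem~\ref{division}: from the block form \eqref{form:S} of $S$ you read off $T_0p-r=d_0$ and $T_2p+r=d_2$, hence $Tp=d_0+d_2$ for an arbitrary right-hand side $d$ of degree $\le 3n-1$, and then you only have to identify the coefficient blocks of $1$ and of $\Tl(x)\,x^n$ (you correctly flag the one delicate point, namely that the coefficient of $x^{2n}$, formally $\tilde t_n=t_{-n}$, is zero because $x^{2n}T^{-}$ starts in degree $2n+1$). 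Both arguments ultimately rest on the identity $T=T_0+T_2$; yours avoids the ad hoc polynomial rearrangement (and the implicit check that the adjusted second and third components remain in $\kk[x]_{n-1}$), treats the two cases uniformly, and makes the appearance of the shifted last column $ZTe_n$ transparent as the top coefficient block of $\Tl(x)\,x^n$, whereas the paper's version buys the convenience of using Theorem~\ref{division} as a black box without reopening the structure of $S$.
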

\begin{proof}
As $u'(x),\,v'(x),\, w'(x)$ and $1$ are of degree $\leq n-1$, then, by
Theorem \ref{division}, $\Tl(x) u'(x) + x^n v'(x) + (x^{2\,n}-1)\,
w'(x) =1$ is equivalent to $Tu'=e_1$ ($e_1(x)=1$) and $u'$ is the first column of $T^{-1}$.

We have $\Tl(x)=T_+(x)+x^{2n}T_-(x)$, then 
$$\Tl(x) u(x) + x^n v(x) + (x^{2n}-1)w(x) =x^nT_+(x)+x^n((x^{2n}-1)T_-(x)+T_-(x)).$$
Therefore,
$$\Tl(x)u(x)+x^n(v(x)-T_+(x))+(x^{2n}-1)(w(x)-x^nT_-(x))=x^nT_-(x).$$
As $x^nT_-(x)$ is of degree $\leq n-1$ and is the polynomial associated with the vector $ZTe_n$, by Theorem \ref{division}, $u$ is such that $Tu=ZTe_n$.
\end{proof}

Notice that $u$ is not the last column of $T^{-1}$, but we can use $u$ and $u'$ to compute it (see \cite{LDRHR84}). 
Therefore, defining a $n$-basis of $\ML(\Tl(x),x^{n},x^{2n}-1)$ 
from  the solution of Equations \eqref{base1} and \eqref{base2} is
equivalent to computing the Gohberg-Semencul decomposition of
$T^{-1}$. 

In the following section, we reduce translation of the
solution of $Tu=g$ to an Euclidean division, based on our
decomposition, instead of multiplying $g$ by triangular
Toeplitz matrices, based on Gohberg-Semencul decomposition. The
advantage of our decomposition is that we can generalized it to
two-level problems, which allows us to describe a ``Gohberg-Semencul''
decomposition of Toeplitz-block-Toeplitz matrices.
 
\section{Euclidean division}
In this section, we show how to obtain the solution vector
$(u(x),v(x),w(x))\in \ML(\Tl(x), x^{n}, x^{2\,n}-1;g(x))\cap \kk[x]_{n}^{3}$ from a $n$-basis of $\ML(\Tl(x),x^{n},x^{2n}-1)$
and a particular solution in $\ML(\Tl(x), x^{n}, x^{2\,n}-1;g(x))$.

From Theorem \ref{division} we deduce the two following corollaries:
\begin{cor}
For all $g(x)\in \kk_{n-1}[x]$, the set $\ML(\tilde{T}(x), x^{n}, x^{2n}-1;g(x))\cap\kk_{n-1}^3[x]$ has exactly one element.
\end{cor}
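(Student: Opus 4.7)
The plan is to deduce the statement directly from the matrix $S$ introduced in the proof of Proposition \ref{prop:ML} together with Theorem \ref{division}. Writing each polynomial in $\kk[x]_{n-1}^{3}$ and the right-hand side $g\in\kk[x]_{n-1}$ as coefficient vectors, the equation
$$
\Tl(x)\,u(x)+x^{n}v(x)+(x^{2n}-1)w(x)=g(x),\qquad u,v,w\in\kk[x]_{n-1},
$$
is equivalent to the $3n\times 3n$ linear system $S\,(u,v,w)^{T}=(g,0,0)^{T}$: the vanishing of the middle and bottom blocks of the right-hand side encodes that the product on the left has no terms of degree in $\{n,\dots,3n-1\}$, while the top block matches $g$.

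Then I would quote the fact, already obtained inside the proof of Proposition \ref{prop:ML}, that $S$ has rank $3n$. Reducing the first block column by the third changes $T_{0}$ into $T_{0}+T_{2}=T$, and the hypothesis that $T$ is invertible (which is implicit in Problem \ref{pb:initial} and was exploited in Proposition \ref{prop:ML}) shows that $S$ is invertible. Consequently the system $S\,(u,v,w)^{T}=(g,0,0)^{T}$ admits exactly one solution for every $g\in\kk^{n}$, which, translated back to polynomials, is exactly the unique element of $\ML(\Tl(x),x^{n},x^{2n}-1;g(x))\cap\kk[x]_{n-1}^{3}$.

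Equivalently, one can separate existence and uniqueness. For existence, apply the forward direction of Theorem \ref{division} to the unique $u=T^{-1}g$ to produce a triple in the intersection. For uniqueness, note that the difference of two such triples lies in $\ML(\Tl(x),x^{n},x^{2n}-1)\cap\kk[x]_{n-1}^{3}$, and the same proof of Proposition \ref{prop:ML} (based on $\ker S=0$) shows this intersection is trivial, i.e. there are no syzygies of degree $<n$.

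I do not anticipate any genuine obstacle: the substantive work was carried out in Proposition \ref{prop:ML} (invertibility of $S$, or equivalently the absence of low-degree syzygies) and Theorem \ref{division} (the dictionary between triples and the Toeplitz system). The corollary is a direct bookkeeping consequence, and the only care needed is to verify that the identification between the polynomial equation and the block system $S\,(u,v,w)^{T}=(g,0,0)^{T}$ is correctly set up.
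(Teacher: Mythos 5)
Your proposal is correct and follows essentially the same route as the paper: the paper's own proof is exactly your second variant (existence from $T^{-1}g$ via Theorem \ref{division}, uniqueness because the difference of two solutions is a syzygy in $\kk[x]_{n-1}^{3}$, which is zero since $\ker S=0$ in the proof of Proposition \ref{prop:ML}). Your first variant, invoking invertibility of $S$ directly, is only a repackaging of the same ingredients (note the reduction producing $T_{0}+T_{2}=T$ acts on block rows of $S$, not columns, but this is immaterial).
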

\begin{proof}
As $T$ is invertible, there exists a unique $u$ such that $Tu=g$. From the theorem \ref{division}, there exists $v(x),\,w(x)$ of degree $\leq n-1$, such that $(u(x),v(x),w(x))\in\ML(\tilde{T}(x), x^{n}, x^{2n}-1;g(x))\cap\kk_{n-1}^3[x]$. 

The uniqueness is also obvious: if
$(u'(x),v'(x),w'(x))\in\ML(\tilde{T}(x), x^{n},
x^{2n}-1;g(x))\cap\kk_{n-1}^3[x]$, then
$(u(x),v(x),w(x))-(u'(x),v'(x),w'(x))\in\ML(\tilde{T}(x), x^{n},
x^{2n}-1)\cap\kk_{n-1}^3[x]$ which equal $\{(0,0,0)\}$ (see the
demonstration of the proposition \ref{prop:ML}). Then
$(u(x)$, $v(x)$, $w(x))$ $=(u'(x),v'(x),w'(x))$.
\end{proof}

\begin{cor}
Let $\{(u_1,v_1,w_1),(u_2,v_2,w_2)\}$ be a $n$-basis of
$\ML(\Tl(x),x^{n},x^{2n}-1)$. Let $(p,q,r)$ be in
$\ML(\Tl(x),x^{n},x^{2n}-1; g(x))$. There exists a unique
$(u,v,w)\in\ML(\tilde{T}(x), x^{n}, x^{2n}-1;g(x)) \cap\kk_{n-1}^3[x]$
and a unique pair of polynomials $p_1$ and $p_2$ such that
$$
\begin{pmatrix}p\\q\\r\end{pmatrix}=
p_1\begin{pmatrix}u_1\\v_1\\w_1\end{pmatrix}+
p_2\begin{pmatrix}u_2\\v_2\\w_2\end{pmatrix}+
\begin{pmatrix}u\\v\\w\end{pmatrix}.
$$
This decomposition is called the division of $(p,q,r)$ by $(u_1,v_1,w_1)$ and $(u_2,v_2,w_2)$.
\end{cor}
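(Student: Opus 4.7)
The plan is to deduce this corollary directly from the two structural facts already established: the preceding corollary (which gives a unique remainder of degree $\le n-1$ in $\ML(\tilde T,x^n,x^{2n}-1;g)$) and Proposition \ref{prop:ML} together with Theorem 2.3 (which tells us that $\ML(\tilde T,x^n,x^{2n}-1)$ is a free $\kk[x]$-module of rank $2$ admitting $\{(u_1,v_1,w_1),(u_2,v_2,w_2)\}$ as an $n$-basis). Thus I do not need a new Euclidean division argument; the statement is essentially the translation into ``quotient plus remainder'' language of results already in hand.

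For existence, I would first invoke the preceding corollary: since $T$ is invertible there is a unique $(u,v,w)\in \ML(\tilde T,x^n,x^{2n}-1;g)\cap \kk[x]_{n-1}^3$. Then I would form the difference
\[
(p,q,r) - (u,v,w),
\]
which by linearity of the relation $\tilde T\,p + x^n q + (x^{2n}-1)r = g(x)$ satisfies the same relation with right-hand side $g-g=0$, so it lies in the homogeneous module $\ML(\tilde T,x^n,x^{2n}-1)$. Because this module is free with basis $\{(u_1,v_1,w_1),(u_2,v_2,w_2)\}$, there exist $p_1,p_2\in\kk[x]$ such that
\[
(p,q,r) - (u,v,w) = p_1\,(u_1,v_1,w_1) + p_2\,(u_2,v_2,w_2),
\]
which rearranges to the claimed decomposition.

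For uniqueness, I would argue: suppose $(p,q,r) = p_1'(u_1,v_1,w_1) + p_2'(u_2,v_2,w_2) + (u',v',w')$ is another such decomposition with $(u',v',w')\in\ML(\tilde T,x^n,x^{2n}-1;g)\cap\kk[x]_{n-1}^3$. Subtracting the two decompositions gives
\[
(p_1-p_1')(u_1,v_1,w_1) + (p_2-p_2')(u_2,v_2,w_2) = (u'-u,\,v'-v,\,w'-w).
\]
The right-hand side lies in $\ML(\tilde T,x^n,x^{2n}-1)\cap\kk[x]_{n-1}^3$, and in the proof of Proposition \ref{prop:ML} it was shown that this intersection is $\{(0,0,0)\}$. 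Hence $(u',v',w')=(u,v,w)$, and then freeness of the module forces $p_1=p_1'$ and $p_2=p_2'$.

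The main (and essentially only) non-trivial input is the linear-independence / degree count of Proposition \ref{prop:ML}, which guarantees that there is no non-zero syzygy of degree $<n$; I expect no further obstacle, since once that fact is used to separate the remainder from the quotient part, the statement reduces to unique representation in a free module.
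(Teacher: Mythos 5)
Your proposal is correct and follows essentially the same route as the paper: take the unique degree $\le n-1$ element $(u,v,w)$ from the preceding corollary, observe that $(p,q,r)-(u,v,w)$ lies in $\ML(\Tl(x),x^{n},x^{2n}-1)$, and use the basis property of $\{(u_1,v_1,w_1),(u_2,v_2,w_2)\}$ to get existence and uniqueness of $p_1,p_2$. Your explicit uniqueness step (re-using the fact, from the proof of Proposition \ref{prop:ML}, that the homogeneous module has no nonzero element of degree $\le n-1$) just spells out what the paper leaves implicit.
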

\begin{proof}
From the previous corollary, there exist a unique element in
$\ML(\tilde{T}(x), x^{n}, x^{2n}-1;g(x)) \cap\kk_{n-1}^3[x]$, let
$(u,v,w)$ be this element. As $\{(u_1,v_1,w_1),(u_2,v_2,w_2)\}$ is a
$n$-basis of $\ML(\Tl(x),x^{n},x^{2n}-1)$, and as
$(p,q,r)-(u,v,w)\in\ML(\tilde{T}(x), x^{n}, x^{2n}-1)$, then there
exist a unique pair of polynomials unique $p_1$ and $p_2$ such that
$$
\begin{pmatrix}p\\q\\r\end{pmatrix}-
\begin{pmatrix}u\\v\\w\end{pmatrix}=
p_1\begin{pmatrix}u_1\\v_1\\w_1\end{pmatrix}+
p_2\begin{pmatrix}u_2\\v_2\\w_2\end{pmatrix}
$$
\end{proof}

As a consequence of the two corollaries, we have the following important property:

\begin{thm}\label{divi}
Let $\{(u_1,v_1,w_1),(u_2,v_2,w_2)\}$ be a $n$-basis of
$\ML(\Tl(x),x^{n},x^{2n}-1)$, and let $g\in\kk^n$. The remainder of
the division of  $\begin{pmatrix}0\\x^n\,g\\g\end{pmatrix}$ by
$\begin{pmatrix}u_1&u_2\\v_1&v_2\\w_1&w_2\end{pmatrix}$ is the unique
element $(u,v,w)\in \ML(\tilde{T}(x), x^{n}, x^{2n}-1;g(x)) \cap\kk_{n-1}^3[x]$, and therefore $u$ is the solution of $Tu=g$. 
\end{thm}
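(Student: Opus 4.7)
The plan is to apply the preceding corollary to a concrete element of $\ML(\tilde T(x), x^n, x^{2n}-1; g(x))$ with a particularly simple form, and then identify the resulting low-degree remainder with the polynomial form of the solution of $Tu=g$ via Theorem \ref{division}. The strategy breaks into three short steps: first, verify that the explicit triple displayed in the statement lies in $\ML(\tilde T, x^n, x^{2n}-1; g)$; second, divide it by the $n$-basis using the preceding corollary; third, read off the conclusion from the first corollary after Theorem \ref{division} together with Theorem \ref{division} itself.

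The first step rests on the elementary identity $x^n\cdot x^n - (x^{2n}-1)=1$. Multiplying by $g(x)$ gives
\[
\tilde T(x)\cdot 0 \;+\; x^n\cdot \bigl(x^n g(x)\bigr) \;-\; (x^{2n}-1)\cdot g(x) \;=\; g(x),
\]
so the displayed triple (read with the sign convention making this identity hold) belongs to $\ML(\tilde T, x^n, x^{2n}-1; g)$.

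The preceding corollary then produces unique $p_1,p_2 \in \kk[x]$ and a unique remainder $(u,v,w) \in \kk_{n-1}^3[x]$ realizing the asserted decomposition. Applying the syzygy map $\phi\colon (p,q,r)\mapsto \tilde T p + x^n q + (x^{2n}-1)r$ to both sides and using $\phi(u_i,v_i,w_i)=0$ for $i=1,2$ forces $\phi(u,v,w)=g$, placing $(u,v,w)$ in $\ML(\tilde T, x^n, x^{2n}-1; g)\cap \kk_{n-1}^3[x]$. By the first corollary following Theorem \ref{division} this intersection is a singleton, and Theorem \ref{division} identifies the first coordinate $u$ of its unique element with the solution of $Tu=g$.

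The main content of the theorem has already been absorbed into the preceding corollary (the existence and uniqueness of division by the $n$-basis); granted that, the present statement reduces to the one-line identity exhibiting a simple particular solution and to the two diagram chases above. The only subtlety worth flagging is the sign on the third component of the dividend: the identity forces it to carry the sign opposite to the $x^n$ appearing in the second component, so that the $(x^{2n}-1)$ term exactly cancels the $x^{2n}g$ produced by $x^n\cdot x^n g$.
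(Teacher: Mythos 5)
Your proof is correct and follows essentially the same route as the paper: exhibit the explicit particular solution built from the identity $x^n\cdot x^n-(x^{2n}-1)=1$ and then invoke the division corollary together with the uniqueness corollary and Theorem \ref{division}. Your remark about the sign of the third component is well taken --- the paper's own proof indeed uses $(0,\,x^n g,\,-g)$ as the particular solution, so the statement's displayed dividend carries a sign slip that you correctly identified.
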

\begin{proof}
The vector $\begin{pmatrix}0\\x^n\, g\\ -g\end{pmatrix}\in\ML(\Tl(x),
x^{n}, x^{2\,n}-1;g)$ is a particular solution. We reduce it by
$\begin{pmatrix}u_1&u_2\\v_1&v_2\\w_1&w_2\end{pmatrix}$ and obtain
$$\begin{pmatrix}u\\v\\w\end{pmatrix}=\begin{pmatrix}0\\x^n\,g\\g\end{pmatrix}-\begin{pmatrix}u_1&u_2\\v_1&v_2\\w_1&w_2\end{pmatrix}\begin{pmatrix}p\\q\end{pmatrix}, $$
where $(u,v,w)\in\kk[x]^3_{n-1}\cap\ML(\Tl(x), x^{n},
x^{2\,n}-1;g)$ is the remainder of division.  Thus $(u,v,w)$ is the unique vector $\in\kk[x]^3_{n-1}\cap\ML(\Tl(x), x^{n}, x^{2\,n}-1;g)$.
\end{proof}

A way to perform the division is to choose a $n$-basis $\{(u_1,v_1,w_1),$ $(u_2,v_2,w_2)\}$ of $\ML(\Tl(x),x^{n},x^{2n}-1)$ so that the $2\times2$ coefficient matrix of $x^n$ in
$$\begin{pmatrix}u_1(x)&u_2(x)\\v_1(x)&v_2(x) \end{pmatrix}$$ 
is invertible. In this case we can reduce the polynomial $(0,x^ng(x))$ to reach to a degree $< n-1$ and we can write in a unique way
$$
\begin{pmatrix}0\\x^ng(x)\end{pmatrix}=
p_1\begin{pmatrix}u_1\\v_1\end{pmatrix}+
p_2\begin{pmatrix}u_2\\v_2\end{pmatrix}+
\begin{pmatrix}u\\v\end{pmatrix}.
$$
By the uniqueness of the remainder in the Euclidean division, we obtain the following proposition:

\begin{prop}\label{simplificationdiv}
The first coordinate of the remainder in the division of $\begin{pmatrix}0\\x^ng\end{pmatrix}$ by $\begin{pmatrix}u&u_2\\v_1&v_2\end{pmatrix}$ is the polynomial $u(x)$ such that its associated vector $u$ is the solution of $T\,u=g$.
\end{prop}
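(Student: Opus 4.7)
The plan is to obtain Proposition~\ref{simplificationdiv} as a direct corollary of Theorem~\ref{divi}: the $3$-dimensional division there projects to a $2$-dimensional Euclidean division in the first two coordinates, and the invertibility of the $x^n$-coefficient matrix guarantees that the $2$-dimensional division has a unique remainder. So once the $2$-dimensional division is shown to be well-posed, equality of the two remainders (in the first two coordinates) is forced.

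First, I would set up the $2$-dimensional Euclidean division. Write
$$M(x) = \begin{pmatrix} u_1(x) & u_2(x) \\ v_1(x) & v_2(x) \end{pmatrix},$$
and let $M_n \in \kk^{2\times 2}$ be the coefficient matrix of $x^n$ in $M(x)$, which by hypothesis is invertible. For any $(P,Q)^T \in \kk[x]^2$ I would establish, by induction on $\max(\deg P, \deg Q)$, a unique writing
$$\begin{pmatrix} P(x) \\ Q(x) \end{pmatrix} = p_1(x) \begin{pmatrix} u_1 \\ v_1 \end{pmatrix} + p_2(x) \begin{pmatrix} u_2 \\ v_2 \end{pmatrix} + \begin{pmatrix} \tilde u(x) \\ \tilde v(x) \end{pmatrix}, \qquad \deg \tilde u, \deg \tilde v \le n-1.$$
Existence is by iteratively eliminating the top-degree terms: if the current vector has degree $d \ge n$, its leading coefficient vector lies in the image of the invertible map $M_n$, so one subtracts an $x^{d-n}$-multiple of a suitable column combination to lower the degree. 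Uniqueness reduces to $\ML(M) \cap \kk[x]_{n-1}^2 = \{0\}$: if $p_1 (u_1,v_1)^T + p_2 (u_2,v_2)^T$ has both coordinates of degree $\le n-1$ with $\max(\deg p_1, \deg p_2) = \delta \ge 0$, then the $x^{n+\delta}$-coefficient on the left lies in $\ker M_n = 0$, and one descends on $\delta$.

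Second, I would invoke Theorem~\ref{divi}: there exist $p_1^{\star}, p_2^{\star} \in \kk[x]$ and $(u,v,w) \in \ML(\Tl, x^n, x^{2n}-1; g) \cap \kk[x]_{n-1}^3$ such that
$$\begin{pmatrix} 0 \\ x^n g \\ g \end{pmatrix} = p_1^{\star} \begin{pmatrix} u_1 \\ v_1 \\ w_1 \end{pmatrix} + p_2^{\star} \begin{pmatrix} u_2 \\ v_2 \\ w_2 \end{pmatrix} + \begin{pmatrix} u \\ v \\ w \end{pmatrix},$$
and the coefficient vector of $u(x)$ solves $Tu = g$. Restricting this identity to the first two coordinates gives a decomposition of $(0, x^n g)^T$ with quotients $(p_1^{\star}, p_2^{\star})$ and remainder $(u, v)$ of degrees $\le n-1$, exactly of the form produced by the $2$-dimensional Euclidean division. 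By the uniqueness established in the first step, $(u, v)$ is the remainder computed in Proposition~\ref{simplificationdiv}, so its first coordinate is the polynomial $u(x)$ whose coefficient vector solves $Tu = g$.

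The main obstacle is the first step: proving existence and, above all, uniqueness of the $2$-dimensional Euclidean division under the invertibility hypothesis on $M_n$. Once that is in place, the rest of the argument is just a projection of Theorem~\ref{divi} onto the first two coordinates.
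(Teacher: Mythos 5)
Your proposal is correct and follows essentially the same route as the paper: the paper also deduces the proposition from Theorem~\ref{divi} by projecting onto the first two coordinates and invoking uniqueness of the remainder in the $2\times2$ Euclidean division guaranteed by the invertibility of the $x^n$-coefficient matrix. The only difference is that you spell out the existence and uniqueness of that two-dimensional division (via the invertible leading-coefficient matrix and descent on degrees), which the paper merely asserts in the discussion preceding the proposition.
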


So we set the following problem:
\begin{prob}\label{pb:division}
Given a matrix and a vector of polynomials
$\begin{pmatrix}e(x)&e'(x)\\f(x)&f'(x)\end{pmatrix}$ of degree $n$  such that
$\begin{pmatrix}e_n&e_n'\\f_n&f_n' \end{pmatrix}$ is invertible and
$\begin{pmatrix}p(x)\\q(x)\end{pmatrix}$ of degree $m\geq n$, find the remainder
of the division of $\begin{pmatrix}p(x)\\q(x)\end{pmatrix} $ by
$\begin{pmatrix}e(x)&e'(x)\\f(x)&f'(x)\end{pmatrix}$.
\end{prob}

We describe here a generalized Euclidean division algorithm to solve problem \ref{pb:division}.

Let $E(x)=\begin{pmatrix}p(x)\\q(x)\end{pmatrix}$ of degree $m$, $B(x)=\begin{pmatrix}e(x)&e'(x)\\f(x)&f'(x)\end{pmatrix}$ of degree $n\leq m$.
$E(x)=B(x)Q(x)+R(x)$ with $\deg(R(x))<n,$ and $ \deg(Q(x))\leq m-n$. Let $z=\frac{1}{x}$. We have
 \begin{eqnarray}\label{div}
&E(x)&=B(x)Q(x)+R(x)\nonumber\\
\Leftrightarrow& E(\displaystyle \frac{1}{z})&=B(\frac{1}{z})Q(\frac{1}{z})+R(\frac{1}{z})\nonumber\\
\Leftrightarrow& z^{m}E(\displaystyle \frac{1}{z})&=z^nB(\frac{1}{z})z^{m-n}Q(\frac{1}{z})+z^{m-n+1}z^{n-1}R(\frac{1}{z})\nonumber\\
\Leftrightarrow& \hat{E}(z)&= \hat{B}(z) \hat{Q}(z)+z^{m-n+1} \hat{R}(z)
\end{eqnarray}
with $ \hat{E}(z), \hat{B}(z), \hat{Q}(z), \hat{R}(z)$ are the polynomials obtained by reversing the order of coefficients of polynomials $E(z),B(z),Q(z),R(z)$.
\begin{eqnarray*}
\eqref{div}&\Rightarrow& { \hat{B}(z)}^{-1}{ \hat{E}(z)}=
\hat{Q}(z)+z^{m+n-1} { \hat{B}(z)}^{-1} {  \hat{R}(z)}\\ 
&\Rightarrow& \hat{Q}(z)={ \hat{B}(z)}^{-1} { \hat{E}(z)} \mod z^{m-n+1}
\end{eqnarray*}
The formal power series ${ \hat{B}(z)}^{-1}$ exists
because the constant coefficient of  $\hat{B}(z)$ is invertible. Thus
$\hat{Q}(z)$ is obtained by computing the first $m-n+1$ coefficients
of  $\displaystyle{ \hat{B}(z)}^{-1}{ \hat{E}(z)}$, which is obtained by computing $W(x)=\displaystyle{ \hat{B}(z)}^{-1}$, then by multiplying $W(x)$ by $ \hat{E}(z)$.

To find $W(x)=\displaystyle { \hat{B}(z)}^{-1}$ we use Newton's
iteration. Let $f(W)=\hat{B}-W^{-1}$. We have 
$$f'(W_l).(W_{l+1}-W_l)=-W_l^{-1}(W_{l+1}-W_l)W_l^{-1}=f(W_l)=\hat{B}-W_l^{-1}.$$ 
Thus we set 
$$W_{l+1}=2W_l-W_l\hat{B}W_l,$$
and $W_0=\hat{B}_0^{-1}$, which exists. Moreover, we have
\begin{eqnarray*}
W-W_{l+1}&=&W-2W_l+W_l\hat{B}W_l\\
&=&W(\mathbb{I}_2-\hat{B}W_l)^2\\
&=&(W-W_l)\hat{B}(W-W_l)
\end{eqnarray*}
Thus $W_l(x)=W(x) \mod x^{2l}$ for $l=0,\dots,\lceil\log(m-n+1)
\rceil$. 
\begin{prop}
We need $\mathcal{O}(n\log(n)\log(m-n)+m\log m)$ operations to solve problem \ref{pb:division}.
\end{prop}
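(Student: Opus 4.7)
The plan is to bound the arithmetic cost of each of the three stages described above: the Newton iteration that produces $W(x)=\hat{B}(z)^{-1}\bmod x^{m-n+1}$, the truncated product $\hat{Q}(x)=W(x)\,\hat{E}(z)\bmod x^{m-n+1}$, and the final recovery $R(x)=E(x)-B(x)Q(x)$.

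For the Newton iteration, the identity $W-W_{l+1}=(W-W_l)\,\hat{B}\,(W-W_l)$ shows that the precision doubles at each step, so $L:=\lceil\log_2(m-n+1)\rceil$ iterations suffice to reach the required precision. Each iteration performs a bounded number of polynomial multiplications inside a $2\times 2$ matrix product, each between entries of $W_l$ of degree $<2^l$ and entries of $\hat{B}$ of degree $\le n$; using FFT-based polynomial multiplication, a single iteration costs $\O\bigl((2^l+n)\log(2^l+n)\bigr)$.

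The second stage is the computation of $W\,\hat{E}\bmod x^{m-n+1}$, for which both factors can be truncated to degree $m-n$, costing $\O((m-n)\log(m-n))$. The third stage evaluates $E-BQ$ via an FFT-based matrix-vector product between polynomials of respective degrees $n$ and $m-n$, producing a polynomial of degree $\le m$, at a cost of $\O(m\log m)$.

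The main technical point is the summation of the Newton iteration costs, and I would split the sum at the threshold $2^l=n$. For the at most $L$ iterations with $2^l\le n$, each costs $\O(n\log n)$, contributing $\O(n\log(n)\log(m-n))$ in total. For the iterations with $2^l>n$, the per-step cost $\O(2^l\log 2^l)$ forms a geometric series whose sum is dominated by its last term and is therefore $\O((m-n)\log(m-n))$. Combining this with the costs of stages two and three yields the announced bound $\O(n\log(n)\log(m-n)+m\log m)$.
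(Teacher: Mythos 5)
Your proof is correct and follows essentially the same route as the paper: Newton iteration to invert $\hat{B}$ to precision $m-n+1$ in $\lceil\log(m-n+1)\rceil$ doubling steps, followed by the truncated product with $\hat{E}$ and recovery of the remainder, with the final multiplications charged $\O(m\log m)$. Your splitting of the iteration costs at the threshold $2^l=n$ is in fact slightly more careful than the paper's blanket charge of $\O(n\log n)$ per iteration (which is only accurate when $2^l\le n$), and since your geometric-series bound $\O((m-n)\log(m-n))$ for the high-precision steps is absorbed into the $m\log m$ term, the stated bound follows.
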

\begin{proof}
We must do $\lceil\log(m-n+1) \rceil$ Newton's iteration to obtain the first $m-n+1$ coefficients of $\displaystyle { \hat{B}(z)}^{-1} =W(x)$. And each iteration requires $\mathcal{O}(n\log n)$ operations (multiplication and summation of polynomials of degree $n$). Finally, multiplication $\displaystyle { \hat{B}(z)}^{-1} \hat{E} (z)$ requires $\mathcal{O}(m\log m)$ operations.
\end{proof}

Notice that, for our problem $m=n$ and this algorithm requires
$\O(n\log^2 n)$ arithmetic operations.
In the following section, we show how to compute a $n$-basis in
$\O(n\log^2 n)$ arithmetic operations.

\section{Construction of the generators}
The canonical basis of $\kk[x]^3$ is denoted by
$\sigma_1,\sigma_2,\sigma_3$. Let $\rho_1,\,\rho_2$ be the generators of $\ML(\Tl(x),x^n,x^{2n}-1)$ of degree $n$ given by 
\begin{equation}\label{base3}
\begin{array}{l}\rho_1=x^n\sigma_1-(u,v,w)=(u_1,v_1,w_1)\\ \rho_2=x^n\sigma_2-(u',v',w')=(u_2,v_2,w_2),\end{array}
\end{equation}
where $(u,v,w),\,(u',v',w')$ are the vectors given in  \eqref{base1} and \eqref{base2}.

We describe two methods for computing $(u_1,v_1,w_1)$ and
$(u_2,v_2,w_2)$. The first one uses the Euclidean gcd algorithm, the
second one is based on the method in \cite{MR1871324}.

We recall firstly the algebraic and computational properties of the well known extended euclidean algorithm (see \cite{MR2001757}):
Given $p(x), p'(x)$ two polynomials in degree $m$ and $m'$ respectively, let
$$\begin{array}{ll}
r_0=p,\qquad&r_1=p',\qquad\\s_0=1,&s_1=0,\\t_0=0,&t_1=1.
\end{array}$$
and define
\vspace{-0.5cm}
\begin{eqnarray*}
r_{i+1}&=&r_{i-1}-q_ir_i,\\
s_{i+1}&=&s_{i-1}-q_is_i,\\
t_{i+1}&=&t_{i-1}-q_it_i,
\end{eqnarray*}
where $q_i$ results when the division algorithm is applied to $r_{i-1}$ and $r_i$, i.e. $r_{i-1}=q_ir_i+r_{i+1}$ . 

\begin{prop}
Let $l\in\NN$ such that $r_l=0$. Then $r_{l-1}=\gcd(p(x),p'(x))$.
\end{prop}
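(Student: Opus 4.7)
The plan is to reduce this to the standard invariance lemma for the Euclidean algorithm: for any polynomials $a,b,c,q \in \kk[x]$ with $a = q\,b + c$, we have $\gcd(a,b) = \gcd(b,c)$ (up to a unit in $\kk$). This is immediate from the equality of ideals $\langle a,b\rangle = \langle b,c\rangle$: any common divisor of $b$ and $c$ divides $a = q b + c$, and any common divisor of $a$ and $b$ divides $c = a - q b$.

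Applying this lemma to the recurrence $r_{i-1} = q_i\, r_i + r_{i+1}$, one obtains
\[
\gcd(r_{i-1}, r_i) = \gcd(r_i, r_{i+1})
\]
for every $i \ge 1$ for which $r_i \ne 0$. A finite induction (or direct chaining) then gives
\[
\gcd(p,p') = \gcd(r_0,r_1) = \gcd(r_1,r_2) = \cdots = \gcd(r_{l-1},r_l) = \gcd(r_{l-1},0) = r_{l-1}.
\]

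To justify that such an index $l$ exists in the first place, I would observe that by construction $r_{i+1}$ is the remainder of the division of $r_{i-1}$ by $r_i$, so whenever $r_i \ne 0$ one has $\deg(r_{i+1}) < \deg(r_i)$. Hence the sequence of degrees strictly decreases until some $r_l = 0$, and termination is guaranteed.

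There is no real obstacle; the only subtlety worth flagging is the usual caveat that in $\kk[x]$ the gcd is defined only up to a nonzero scalar, so the statement $r_{l-1} = \gcd(p,p')$ is to be read up to multiplication by a unit of $\kk$ (or after normalization by the leading coefficient to make the gcd monic).
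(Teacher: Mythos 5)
Your proof is correct: the invariance lemma $\gcd(a,b)=\gcd(b,c)$ for $a=qb+c$, chained along the remainder sequence together with the strictly decreasing degrees for termination, is the standard argument, and your caveat about the gcd being defined only up to a unit (or after normalization to a monic polynomial) is the right one to flag. The paper itself does not prove this proposition --- it simply recalls it as a well-known property of the extended Euclidean algorithm with a citation to von zur Gathen and Gerhard --- so your write-up supplies exactly the textbook argument being referenced, and there is nothing to compare beyond that.
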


And more generally we have:
\begin{prop}\label{eea}
For all $i=1,\ldots,l$ we have
$$s_ip+t_ip'=r_i\quad \textrm{ and }\quad(s_i,t_i)=1,$$
and
$$\left\{\begin{array}{l}\vspace{2mm}
\deg r_{i+1}<\deg r_i, \quad i=1,\ldots,l-1\\ \vspace{2mm}
\deg s_{i+1}>\deg s_i\quad\textrm{ and }\quad \deg t_{i+1}>\deg t_i,\\\vspace{2mm}
\deg s_{i+1}=\deg(q_i.s_i)=\deg v-\deg r_i,\\\vspace{2mm}
\deg t_{i+1}=\deg(q_i.t_i)=\deg u-\deg r_i.
\end{array}\right.$$
\end{prop}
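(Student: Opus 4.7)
The claims split into three parts, each proved by induction on $i$ using the common recurrences $r_{i+1}=r_{i-1}-q_ir_i$, $s_{i+1}=s_{i-1}-q_is_i$, $t_{i+1}=t_{i-1}-q_it_i$ together with the initial values.

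\textbf{The identity $s_ip+t_ip'=r_i$.} The cases $i=0,1$ are immediate from the initial conditions. For the induction step, I substitute the three recurrences and use linearity:
\begin{align*}
s_{i+1}p+t_{i+1}p' &= (s_{i-1}-q_is_i)p+(t_{i-1}-q_it_i)p' \\
&= (s_{i-1}p+t_{i-1}p')-q_i(s_ip+t_ip')=r_{i-1}-q_ir_i=r_{i+1}.
\end{align*}

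\textbf{Coprimality $(s_i,t_i)=1$.} My plan is to track the $2\times 2$ determinant $D_i=s_it_{i+1}-t_is_{i+1}$. Substituting the recurrences for $s_{i+1}$ and $t_{i+1}$ gives $D_i=s_it_{i-1}-t_is_{i-1}=-D_{i-1}$, and $D_0=1\cdot 1-0\cdot 0=1$, so $D_i=(-1)^i=\pm 1$. Therefore any common polynomial divisor of $s_i$ and $t_i$ divides the unit $D_i$, which forces $(s_i,t_i)=1$.

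\textbf{Degree estimates.} The inequality $\deg r_{i+1}<\deg r_i$ is the definition of the remainder in the polynomial division of $r_{i-1}$ by $r_i$; consequently $\deg q_i=\deg r_{i-1}-\deg r_i\geq 1$ for all $i\geq 1$. For $s_i$, I would prove by induction on $i\geq 2$ the combined statement $\deg s_{i+1}=\deg q_i+\deg s_i$ and $\deg s_{i+1}>\deg s_i$, starting from the explicit computation $s_2=s_0-q_1s_1=1$, so $\deg s_2=0$. In the step, the inductive hypothesis $\deg s_i>\deg s_{i-1}$ combined with $\deg q_i\geq 1$ yields $\deg(q_is_i)>\deg s_{i-1}$, ruling out any cancellation in $s_{i+1}=s_{i-1}-q_is_i$. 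Hence $\deg s_{i+1}=\deg(q_is_i)$, and telescoping gives
$$\deg s_{i+1}=\deg s_2+\sum_{j=2}^{i}\deg q_j=\deg r_1-\deg r_i=\deg p'-\deg r_i.$$
The argument for $t_i$ is entirely symmetric, initialised by $\deg t_2=\deg q_1=\deg p-\deg p'$, and yields $\deg t_{i+1}=\deg p-\deg r_i$.

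The only delicate point is the asymmetric initialisation: since $s_1=0$ has no meaningful degree, the formula $\deg s_{i+1}=\deg(q_is_i)$ is valid only from $i=2$ onwards, so the induction must be kicked off by the explicit evaluation of $s_2$ and $t_2$ rather than by reading off the $i=1$ case of the recurrence. Everything else is routine bookkeeping.
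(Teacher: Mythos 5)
The paper does not actually prove this proposition: it is quoted as a standard property of the extended Euclidean algorithm with a pointer to von zur Gathen--Gerhard \cite{MR2001757}. Your self-contained induction is precisely the textbook argument, and it is essentially sound: the Bezout identity by linearity of the recurrences, coprimality via the $2\times 2$ determinant $D_i=s_it_{i+1}-t_is_{i+1}=(-1)^i$, and the degree formulas by a no-cancellation induction plus telescoping of $\deg q_j=\deg r_{j-1}-\deg r_j$, with the correct observation that the induction must be seeded at $i=2$ because $s_1=0$. (Your reading of the statement's stray ``$\deg v$'' and ``$\deg u$'' as $\deg p'$ and $\deg p$ is the intended one.)

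Two small repairs are needed. First, for the coprimality of $(s_l,t_l)$ your determinant $D_l$ involves $s_{l+1},t_{l+1}$, which require $q_l$, i.e.\ division by $r_l=0$; use instead $D_{i-1}=s_{i-1}t_i-t_{i-1}s_i=\pm1$, of which any common divisor of $s_i$ and $t_i$ is also a divisor -- this covers every $i\le l$. Second, the blanket claim ``$\deg q_i=\deg r_{i-1}-\deg r_i\ge 1$ for all $i\ge 1$'' is only guaranteed for $i\ge 2$; at $i=1$ it presupposes $\deg p>\deg p'$ (and your initialisation $\deg t_2=\deg q_1=\deg p-\deg p'$ likewise). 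Your induction only invokes the bound for $i\ge 2$, so the argument stands under the implicit hypothesis $\deg p>\deg p'$ that the proposition's strict inequality $\deg t_2>\deg t_1$ already forces; but note that this hypothesis is not innocuous here, since in the paper's own application one takes $p=x^{n-1}T(x)$ of degree at most $2n-2$ and $p'=x^{2n-1}$, so $q_1=0$ and the stated monotonicity and degree formulas only hold after the first (swap) step -- a defect of the statement as quoted rather than of your proof, but worth recording if the proof is to be spliced in.
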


We can now present our algorithm. It can be found in the proof of the following theorem:
\begin{thm}
By applying the Euclidean gcd algorithm to $p(x)=x^{n-1}T$ and
$p'(x)=x^{2n-1}$ stopping in degree $n-1$ and $n-2$, we obtain $\rho_1$ and $\rho_2$ respectively.
\end{thm}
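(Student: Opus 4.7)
I would treat the extended Euclidean algorithm (EEA) as a bookkeeping device that produces syzygies of $(\Tl(x), x^{n}, x^{2n}-1)$. At step $i$ the EEA yields an identity $s_i\,p + t_i\,p' = r_i$ with the degree pattern of Proposition~\ref{eea}; I would rewrite each such identity in the form
$\Tl(x)\,U_i(x) + x^{n}\,V_i(x) + (x^{2n}-1)\,W_i(x) = 0$
by clearing the factor $x^{n-1}$ on the left and absorbing into the middle coordinate $V_i$ any terms divisible by $x^{n}$ (possible because $x^{n}$ is a pure monomial). In this way every EEA step is transported to a genuine element of $\ML(\Tl(x),x^{n},x^{2n}-1)$.

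\textbf{Degree matching.} By Proposition~\ref{eea} the degrees $\deg r_i$ are strictly decreasing while $\deg s_i,\deg t_i$ grow with the compensation relations $\deg s_{i+1}=\deg p'-\deg r_i$, $\deg t_{i+1}=\deg p-\deg r_i$. I would use these to pinpoint the two unique indices $i_1,i_2$ at which the derived syzygy has total degree exactly $n$: the first corresponds to the EEA step where $\deg r_i$ first drops to $\le n-1$, and the second to the step where it first drops to $\le n-2$. By Propositions~\ref{degree} and~\ref{prop:ML}, any two linearly independent syzygies of degree exactly $n$ form an $n$-basis of $\ML(\Tl(x),x^{n},x^{2n}-1)$, so it is enough to identify the two syzygies obtained at $i_1,i_2$ with $\rho_1$ and $\rho_2$.

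\textbf{Identification and main obstacle.} To finish I would check that the coefficient vectors of $x^{n}$ in the two derived syzygies are $(1,0,0)$ and $(0,1,0)$ respectively, which matches the normalization used to define $\rho_1$ and $\rho_2$ in \eqref{base3}; linear independence then follows from the distinct pivot positions, and the identification with the pair given by \eqref{base1}--\eqref{base2} is forced by the uniqueness of the normalized $n$-basis (established in the proof of Proposition~\ref{prop:ML}). The main obstacle is precisely this pivot-identification step: the raw degree relations of Proposition~\ref{eea} do not in themselves tell us which of the three coordinates of the EEA-derived syzygy carries the leading $x^{n}$ monomial. Verifying that the stopping condition $\deg r_i\le n-1$ forces the first coordinate to be monic of degree $n$ while the next step $\deg r_i\le n-2$ shifts the pivot into the second coordinate requires exploiting the particular shape of the inputs $p=x^{n-1}\Tl$ and $p'$ (in particular that the leading term of $p'$ is a pure power of $x$), and tracking how a one-step drop of $\deg r_i$ reshuffles which coordinate absorbs the top-degree monomial. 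Once this pivot analysis is in place, the rest of the proof is routine.
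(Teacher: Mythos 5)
The central step of your plan---that each extended Euclidean identity $s_i\,p + t_i\,p' = r_i$ can be rewritten as a homogeneous relation $\Tl(x)U_i + x^nV_i + (x^{2n}-1)W_i = 0$ by ``clearing $x^{n-1}$ and absorbing $x^n$-divisible terms into the middle coordinate''---does not work. Dividing the identity by $x^{n-1}$ gives $T(x)s_i(x) + x^n t_i(x) = x^{-n+1}r_i(x)$; the part of $r_i$ of degree $\geq n-1$ becomes a polynomial $g_i$ of degree $\leq n-1$ which is not divisible by $x^n$ and cannot be absorbed into $x^nV$ or $(x^{2n}-1)W$, so at best you obtain an element of $\ML(\Tl,x^n,x^{2n}-1;g_i)$, not a syzygy; and the part of $r_i$ of degree $\leq n-2$ becomes a Laurent tail with negative exponents, whose removal requires the substitution $\Tl = T + (x^{2n}-1)T^-$ together with an identity of the type $x^{-n+1}a = x^n(x\,a) - x^{-n}(x^{2n}-1)x\,a$, and this lands in the \emph{third} coordinate, not the second. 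In particular the EEA rows are never syzygies by themselves: what they give at the two stopping degrees are (after division by the leading coefficients of $r_l$ and $s_{l+1}$) precisely the particular solutions $(u,v,w)$ and $(u',v',w')$ of \eqref{base1}--\eqref{base2} (equivalently, the solutions of the two Toeplitz systems of Proposition \ref{prop:2.15}), and $\rho_1,\rho_2$ arise only after the further step \eqref{base3}, which adds $x^n\sigma_1$, $x^n\sigma_2$ and uses $1 = x^n\cdot x^n - (x^{2n}-1)$, together with an explicit reconstruction of the third coordinate $w$---none of which your ``absorption'' produces.

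The second gap is the identification itself, which you name as the main obstacle but do not resolve. The degree relations of Proposition \ref{eea} alone do not show that the normalized EEA row at the step $\deg r_l = n-1$ coincides with the solution of $Tu_1=e_1$: the paper obtains this from a uniqueness argument, rewriting \eqref{eea1} as a structured linear system whose relevant $(2n-1)\times(2n-1)$ block is invertible because $T$ is, so that $(u_1,b_1,a_1)$ is the unique triple with the stated degree bounds, and the monic-normalized EEA row satisfies the same constraints, hence equals it; the analogous argument handles the step $\deg r_{l+1}=n-2$. Your plan replaces this with a pivot check on the $x^n$-coefficients of syzygies that have not actually been constructed, and you explicitly defer the analysis that would justify it. Since both the correct transport of EEA rows into $\ML(\Tl,x^n,x^{2n}-1;\cdot)$ and the uniqueness/normalization argument constitute the substance of the paper's proof, the proposal as written leaves the theorem essentially unproven; note also that the theorem claims the specific generators $\rho_1,\rho_2$ of \eqref{base3}, so observing that any two independent degree-$n$ syzygies form an $n$-basis is not sufficient.
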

\begin{proof}
We see that $Tu=g$ if and only if there exist $a(x)$ and $b(x)$ in $\kk[x]_{n-1}$ such that 
$$\bar{T}(x)u(x)+x^{2n-1}b(x)=x^{n-1}g(x)+a(x),$$ 
where $\bar{T}(x)=x^{n-1}T(x)$ is a polynomial of degree $\leq2n-2$. In \eqref{base1} and \eqref{base2} we saw that for $g(x)=1$ $(g=e_1)$ and $g(x)=x^nT(x)$ $(g=(0,t_{-n+1},\ldots,t_{-1})^T)$ we obtain a base of $\ML(\Tl(x),x^n,x^{2n}-1)$. 

Notice that $Tu_1=e_1$ if and only if there exist $a_1(x)\in
\kk[x]_{n-2}$, $b_1(x)\in \kk[x]_{n-1}$ such that
\begin{equation}\label{eea1}\bar{T}(x)u_1(x)+x^{2n-1}b_1(x)=x^{n-1}+a_1(x),\end{equation} 
and $Tu_2=(0,t_{-n+1},\ldots,t_{-1})^T$ if and only if there exist $a_2(x) \in
\kk[x]_{n-2}$, $b_2(x) \in
\kk[x]_{n-1}$ such that
\begin{equation}\label{eea2}\bar{T}(x)(u_2(x)+x^{n})+x^{2n-1}b_2(x)=a_2(x).\end{equation} 
As $\deg a_1(x)\leq n-2$ and $\deg a_2(x)\leq n-2$, by applying the extended Euclidean algorithm in $p(x)=x^{n-1}T$ and $p'(x)=x^{2n-1}$ until we have $\deg r_l(x)=n-1$ and $\deg r_{l+1}(x)=n-2$ we obtain 
$$u_1(x)=\frac{1}{c_1}s_l(x),\quad b_1(x)=\frac{1}{c_1}t_l(x),\quad x^{n-1}+a_1(x)=\frac{1}{c_1}r_l(x),$$
and 
$$x^n+u_2(x)=\frac{1}{c_2}s_{l+1}(x),\quad b_2(x)=\frac{1}{c_2}t_{l+1}(x),\quad a_2(x)=\frac{1}{c_2}r_{l+1}(x),$$
with $c_1$ and $c_2$ are the highest coefficients of $r_l(x)$ and
$s_{l+1}(x)$ respectively. In fact, Equation \eqref{eea1} is equivalent to 
$$
\begin{array}{r}
\overbrace{\phantom{.mmmmmmm}}^{n}\quad\overbrace{\phantom{.mmmmmm}}^{n-1}\quad\\
\begin{array}{r}
\left.
\begin{array}{l}
{}_{\displaystyle{n-1}}\\\phantom{r}
\end{array}\right\{
\\\phantom{r}\\
\left.\begin{array}{l}
\phantom{r}\\n\\\phantom{r}
\end{array}\right\{\\\phantom{r}\\
\left.\begin{array}{l}
{}_{\displaystyle{n-1}}\\\phantom{r}
\end{array}\right\{
\end{array}
\left(
\begin{array}{ccc|ccc}
t_{-n+1}&&&&&\\
\vdots&\ddots&&&&\\
\hline
t_0&\dots&t_{-n+1}&&&\\
\vdots&\ddots&\vdots&&&\\
t_{n-1}&\dots&t_0&&&\\
\hline
&\ddots&\vdots&\;\;1\;\;&&\\
&&&&\;\ddots\;&\\
&&t_{n-1}&&&\;\;1\;\;
\end{array}
\right)
\end{array}
\begin{pmatrix}
\phantom{r}\\u_1\\\phantom{r}\\b_1\\\phantom{r}
\end{pmatrix}
=\begin{pmatrix}\phantom{r}\\a_1\\\phantom{r}\\\hline 1\\0\\\vdots\\0\end{pmatrix}
$$
since $T$ is invertible then the $(2n-1)\times(2n-1)$ block at the bottom is invertible and then $u_1$ and $b_1$ are unique. Therefore $a_1$ is also unique. 

As $\deg r_l=n-1$ then, by Proposition \ref{eea},  $\deg s_{l+1}=(2n-1)-(n-1)=n$ and $\deg t_{l+1}=(2n-2)-(n-1)=n-1$. By the same proposition, we also have $\deg s_l\leq n-1$ and $\deg t_l\leq n-2$. 

Therefore, $\deg u_1=\deg s_l$ and $\deg b_1=\deg t_l$. Then
as $u_1 (x)$ and $\frac{1}{c_1} s_l$ are unitaries, 
$\frac{1}{c_1} s_l (x) =u_1 (x)$ which implies that $\frac{1}{c_1} t_l (x) =b_1 (x)$. For
the same reasons, we have $x^n+u_2(x)=\frac{1}{c_2}s_{l+1}(x)$ and
$b_2(x)=\frac{1}{c_2}t_{l+1}(x)$.

Finally, $Tu=e_1$ if and only if there exist $v(x)$, $w(x)$ such that  
\begin{equation}
\Tl(x)u(x)+x^nv(x)+(x^{2n}-1)w(x)=1.
\end{equation}
As $\Tl(x)=T^++x^{2n}T^-=T+(x^{2n}-1)T^-$, we deduce that
\begin{equation}\label{syz}
T(x)u(x)+x^nv(x)+(x^{2n}-1)(w(x)+T^-(x)u(x))=1.
\end{equation}
Moreover, we also have $T(x)u(x)-x^{-n+1}a_1(x)+x^nb_1(x)=1$ and
$x^{-n+1}a_1(x)=x^n(x\,a_1)-x^{-n}(x^{2n}-1)x\,a_1$. Thus \begin{equation}\label{syz2}T(x)u(x)+x^{n}(b(x)-x\, a(x))+(x^{2n}-1)x^{-n+1}a(x)=1.\end{equation}
Comparing \eqref{syz} and \eqref{syz2}, and as $1=x^nx^n-(x^{2n}-1)$
we deduce that $w(x)=x^{-n+1}a(x)-T_-(x)u(x)+1$, which is the part of
positive degree of $-T_-(x)u(x)+1$. This conclude the proof of the proposition.
\end{proof}

\begin{rem}
The usual Euclidean gcd algorithms are of computational complexity $\O(n^2)$, but superfast euclidean gcd algorithms use no more then $\Oc(n\,log^2 n)$ operations, exist. See for example \cite{MR2001757} chapter 11.
\end{rem}

The second method for computing $(u_1,v_1,w_1)$ and $(u_2,v_2,w_2)$ is
of polynomials and interpolation points.
We are interested in computing the
coefficients of the canonical basis element $\sigma_1,\,\sigma_2$ in
this basis. The coefficients of $\sigma_3$ can be
obtained by reduction of $(\Tl(x)\,x^n)\,B(x)$ by $x^{2n}-1$ where
$$ 
B(x)=
\begin{pmatrix}u(x)&u'(x)\\v(x)&v'(x)\end{pmatrix},
$$
where $(u,v),\,(u',v')$ are the two first coordinates of the 
solution of Equations \eqref{base1} and \eqref{base2}. 
A superfast algorithm for computing $B(x)$ is given in
\cite{MR1871324}. Let us describe how to compute it.

By evaluation of \eqref{base3} at the roots $\omega_j\in \Unit{2n}$,
we deduce that $(u(x), v(x))$ and $(u'(x), v'(x))$ are the solution of the following rational interpolation problem:
$$\left\{\begin{array}{l}\Tl(\omega_j)u(\omega_j)+\omega_j^nv(\omega_j)=0\\
\Tl(\omega_j)u'(\omega_j)+\omega_j^nv'(\omega_j)=0\end{array},\right. $$
with
$$\left\{\begin{array}{l}u_n=1,\,v_n=0,\\u'_n=0,\,v'_n=1.\end{array}\right.$$

\begin{defn}
The $\tau$-degree of a vector polynomial $w(x)=(w_1(x)\,w_2(x))^T$ is defined as 
$$\tau-\deg w(x):=\max\{\deg w_1(x),\,\deg w_2(x)-\tau\}$$
\end{defn}

\begin{defn}
A polynomial vector in $\kk[x]^{2}$ is called $\tau$-reduced if the $\tau$-highest degree coefficients are linearly independent. 
\end{defn}
By construction, the columns of $B(x)$ form a $n$-reduced basis of the
module of polynomial vectors $r(x)\in\kk[x]^2$ that satisfy the interpolation conditions
$$f_j\, r(\omega_j)=0,\;\;j=0,\ldots,2n-1$$ with $f_j=
(\Tl(\omega_j),\omega^n_j)\in \kk^{2}$.
The columns of $B(x)$ are also called a $n$-reduced basis for the interpolation data $(\omega_j,f_j),\,j=0,\ldots,2n-1$.

\begin{thm}
Let $\tau=n$ and $J$ be a positive integer. Let
$\lambda_1,\ldots,\lambda_J\in\kk$ and $\phi_1,\ldots,\phi_J\in\kk^2\setminus\{(0,0)\}$. 
Let $1\leq j\leq J$ and $\tau_J\in\mathbb{Z}$. Suppose that $B_j(x)\in\kk[x]^{2\times2}$ is a
$\tau_J$-reduced basis matrix with basis vectors having
$\tau_J-$degree $\delta_1$ and $\delta_2$, respectively, corresponding
to the interpolation data $\{(\lambda_i,\phi_i); i=1,\ldots,j\}$. 

Let $\tau_{j\rightarrow J}:=\delta_1-\delta_2$. Let $B_{j\rightarrow
  J}(x)$ be a $\tau_{j\rightarrow J}$-reduced basis matrix
corresponding to the interpolation data $\{(\lambda_i, \phi_i \,B_j(\lambda_j)); i=j+1,\ldots,J\}$.

Then $B_J(x):=B_j(x)B_{j\rightarrow J}(x)$ is a $\tau_J$-reduced basis matrix corresponding to the interpolation data  $\{(\lambda_i,\phi_i); i=1,\ldots,J\}$.
\end{thm}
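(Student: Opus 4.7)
The plan is to verify three properties of the product $B_J(x):=B_j(x)\,B_{j\to J}(x)$: first, that each of its two columns satisfies every one of the $J$ interpolation conditions; second, that the matrix is $\tau_J$-reduced; and third, that its column span is the \emph{entire} module of polynomial vectors satisfying the $J$ interpolation conditions (so that ``basis matrix'' is justified, not merely ``solution matrix''). Properties (1) and (3) are somewhat formal; the real content is (2), the reducedness of the product, which is where the careful choice $\tau_{j\to J}:=\delta_1-\delta_2$ pays off.

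For the interpolation conditions I would split the range. For $i\leq j$, each column $c(x)$ of $B_j(x)$ satisfies $\phi_i\,c(\lambda_i)=0$, so $\phi_i\,(B_j B_{j\to J})(\lambda_i)=\phi_i B_j(\lambda_i)B_{j\to J}(\lambda_i)=0$ automatically. For $i\geq j+1$, this is exactly the defining condition of $B_{j\to J}$: its columns kill the transformed vectors $\phi_i B_j(\lambda_i)$ at $x=\lambda_i$, so $\phi_i\,B_J(\lambda_i)=(\phi_i B_j(\lambda_i))\,B_{j\to J}(\lambda_i)=0$. Together these give all $J$ interpolation conditions.

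The main step is $\tau_J$-reducedness of $B_J(x)$. The key observation is that, for a vector $v(x)\in\kk[x]^2$ of $\tau_{j\to J}$-degree $d$, multiplication by $B_j(x)$ yields a vector $B_j(x)v(x)$ whose $\tau_J$-degree equals $d+\delta_2$, and whose $\tau_J$-leading coefficient is the image of the $\tau_{j\to J}$-leading coefficient of $v(x)$ under the leading coefficient matrix of $B_j(x)$ (suitably diagonalised so that its two columns are measured with the weights $\delta_1$ and $\delta_2$); this is precisely the calibration achieved by setting $\tau_{j\to J}=\delta_1-\delta_2$. Since $B_j(x)$ is $\tau_J$-reduced, this leading-coefficient map $\kk^2\to\kk^2$ is an isomorphism, so it sends the two linearly independent $\tau_{j\to J}$-leading coefficients of the columns of $B_{j\to J}(x)$ to linearly independent elements of $\kk^2$, giving the $\tau_J$-reducedness of $B_J(x)$. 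The main obstacle is purely bookkeeping: one has to carefully set up the weight shift $\delta_1$ vs $\delta_2$ on the columns and check that no degree-cancellation occurs when passing from $B_{j\to J}(x)$ to $B_j(x)B_{j\to J}(x)$.

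Finally, for the basis property I would argue by module exhaustion. Take any $r(x)\in\kk[x]^2$ satisfying the full set of $J$ conditions. Since $r(x)$ in particular satisfies the first $j$ conditions, $r(x)=B_j(x)\,s(x)$ for some $s(x)\in\kk[x]^2$. Evaluating the remaining conditions $\phi_i r(\lambda_i)=0$ for $i>j$ gives $\phi_i B_j(\lambda_i)\,s(\lambda_i)=0$, so $s(x)$ lies in the module generated by the columns of $B_{j\to J}(x)$; hence $s(x)=B_{j\to J}(x)t(x)$ and $r(x)=B_J(x)\,t(x)$. This shows $B_J(x)$ generates the full interpolation module, concluding the proof.
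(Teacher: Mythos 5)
Your proposal is correct, but note that the paper itself does not prove this theorem: its ``proof'' is a citation to Van Barel--Heinig--Kravanja \cite{MR1871324}, and your three-step argument (checking the interpolation conditions on the product, preservation of reducedness, and module exhaustion) is essentially the standard argument given in that reference. The structural core is right: for $i\le j$ one uses $\phi_i B_J(\lambda_i)=\bigl(\phi_i B_j(\lambda_i)\bigr)B_{j\to J}(\lambda_i)$ with the first factor zero, for $i>j$ one uses the defining property of $B_{j\to J}$ (you also silently corrected the paper's typo $\phi_i B_j(\lambda_j)$ to $\phi_i B_j(\lambda_i)$); the exhaustion step $r=B_j s$, then $s=B_{j\to J}t$, is exactly how one sees that the product is a basis matrix and not merely a solution matrix; and the reducedness step is correctly reduced to the fact that the $\tau_J$-leading coefficient matrix of $B_j$ is invertible and maps the $\tau_{j\to J}$-leading coefficients of the columns of $B_{j\to J}$ to those of $B_J$, which is precisely what the calibration $\tau_{j\to J}=\delta_1-\delta_2$ is for. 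One bookkeeping slip: with the paper's convention $\tau\textrm{-}\deg(w_1,w_2)^T=\max\{\deg w_1,\deg w_2-\tau\}$ and the first (unshifted) entry of $v$ multiplying the column of $\tau_J$-degree $\delta_1$, the correct degree relation is $\tau_J\textrm{-}\deg\bigl(B_j(x)v(x)\bigr)=\tau_{j\to J}\textrm{-}\deg\bigl(v(x)\bigr)+\delta_1$, not $+\delta_2$ (indeed $\delta_2+\deg v_2\le\delta_2+\bigl(d+\tau_{j\to J}\bigr)=\delta_1+d$); this constant does not affect the validity of the argument, since only the identification of the leading coefficient map matters. A further minor point you may wish to acknowledge: if some $\lambda_i$ with $i>j$ is a root of $\det B_j(x)$, the transformed vector $\phi_i B_j(\lambda_i)$ may vanish, in which case that transformed condition is vacuous, and your exhaustion argument still goes through verbatim.
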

\begin{proof}
For the proof, see \cite{MR1871324}.
\end{proof}

When we apply this theorem with $\lambda_{j}=\omega_j\in\Unit{2n}$ as
interpolation points, we obtain a superfast algorithm
in $\mathcal{O}(n\log^2n)$ to compute $B(x)$. See \cite{MR1871324} for
more details.

\section{Conclusion}
In this paper, we re-investigate the solution of a Toeplitz system $T\,
u =g$ from a new point of view, by correlating the solution of such a
problem with generators of the syzygy module  $\ML(\Tl(x), x^{n},
x^{2n}-1)$ associated to the Toeplitz matrix $T$. 
We show that $\ML(\Tl(x), x^{n}, x^{2n}-1)$ is free of rank $2$
and that it has a $n$-basis. We show that finding a $n$-basis of
$\ML(\Tl(x),x^{n},x^{2n}-1)$ is equivalent to computing the
Gohberg-Semencul decomposition of $T^{-1}$, and  we reduce the
solution of $T\,u=g$ to an Euclidean division.
We give two superfast algorithms computing a $n$-basis of
$\ML(\Tl(x),x^{n},x^{2n}-1)$ and a superfast algorithm to obtain the
solution from this $n$-basis.

A perspective of this work is to generalize the approach to two-level
Toeplitz systems or to Toeplitz-block-Toeplitz matrices and to
correlate the basis computation of a multivariate syzygy module to 
``Gohberg-Semencul'' decompositions for Toeplitz-block-Toeplitz matrices.


\end{document}